\numberwithin{equation}{section}
\newtheorem{Theorem}{Theorem}[section]
\newtheorem*{Theorem*}{Theorem}
\newtheorem{Lemma}[Theorem]{Lemma}
\theoremstyle{definition}
\begin{document}

\allowdisplaybreaks

\newcommand{\arXivNumber}{2404.18372}

\renewcommand{\PaperNumber}{091}

\FirstPageHeading

\ShortArticleName{Integrable Semi-Discretization for a Modified Camassa--Holm Equation}

\ArticleName{Integrable Semi-Discretization for a Modified\\ Camassa--Holm Equation with Cubic Nonlinearity}

\Author{Bao-Feng FENG~$^{\rm a}$, Heng-Chun HU~$^{\rm b}$, Han-Han SHENG~$^{\rm cd}$, Wei YIN~$^{\rm ae}$ and Guo-Fu YU~$^{\rm d}$}

\AuthorNameForHeading{B.-F.~Feng, H.-C.~Hu, H.-H.~Sheng, W.~Yin and G.-F.~Yu}

\Address{$^{\rm a)}$~School of Mathematical and Statistical Sciences,\\
\hphantom{$^{\rm a)}$}~The University of Texas Rio Grande Valley, Edinburg, Texas 78541, USA}
\EmailD{\href{mailto:baofeng.feng@utrgv.edu}{baofeng.feng@utrgv.edu}}

\Address{$^{\rm b)}$~College of Science, University of Shanghai for Science and Technology,\\
\hphantom{$^{\rm b)}$}~Shanghai 200093, P.R.~China}
\EmailD{\href{mailto:hhengchun@163.com}{hhengchun@163.com}}

\Address{$^{\rm c)}$~Department of Mathematics, City University of Hong Kong,\\
\hphantom{$^{\rm c)}$}~Tat Chee Avenue, Kowloon, Hong Kong, P.R.~China}
\EmailD{\href{mailto:tutu123@sjtu.edu.cn}{tutu123@sjtu.edu.cn}}

\Address{$^{\rm d)}$~School of Mathematical Sciences, CMA-Shanghai, Shanghai Jiao Tong University, \\
\hphantom{$^{\rm d)}$}~Shanghai 200240, P.R.~China }
\EmailD{ \href{mailto:gfyu@sjtu.edu.cn}{gfyu@sjtu.edu.cn}}

\Address{$^{\rm e)}$~Department of Mathematics, South Texas College, McAllen, Texas 78501, USA}
\EmailD{\href{mailto:wei.yin01@utrgv.edu}{wei.yin01@utrgv.edu}}

\ArticleDates{Received April 30, 2024, in final form October 07, 2024; Published online October 12, 2024}

\Abstract{In the present paper, an integrable semi-discretization of the modified Camassa--Holm (mCH) equation with cubic nonlinearity is presented. The key points of the construction are based on the discrete Kadomtsev--Petviashvili (KP) equation and appropriate definition of discrete reciprocal transformations. First, we demonstrate that these bilinear equations and their determinant solutions can be derived from the discrete KP equation through Miwa transformation and some reductions. Then, by scrutinizing the reduction process, we obtain a set of semi-discrete bilinear equations and their general soliton solutions in the Gram-type determinant form. Finally, we obtain an integrable semi-discrete analog of the mCH equation by introducing dependent variables and discrete reciprocal transformation. It is also shown that the	semi-discrete mCH equation converges to the continuous one in the continuum limit.}

\Keywords{modified Camassa--Holm equation; discrete KP equation; Miwa transformation}

\Classification{35Q53; 37K10; 35C05; 37K40}

\section{Introduction}
	In this paper, we are concerned with integrable discretization of the following modified Camassa--Holm (mCH) equation with cubic nonlinearity
\begin{align}
m_t+\big[m\big(u^2-u_x^2\big)\big]_x=0,\qquad m=u-u_{xx}.\label{mch}
\end{align}
Here $u=u(x,t)$ is a real valued function of time $t$ and a spatial variable $x$, and the subscripts $x$ and $t$ appended to $m$ and $u$ denote partial differentiation. It was firstly proposed by Fuchssteiner and Fokas in 1981 (see \cite[equation~(32)]{Fuchssteiner-Fokas}) as a special case of a more general system.	Then it appeared in the papers of Fokas \cite{Fokas}, Fuchssteiner \cite{Fuch}, Olver and Rosenau \cite{olv}, and later was rediscovered by Qiao \cite{qiao,qiao1}.
The mCH equation \eqref{mch} has attracted considerable attention over the past two decades due to its rich mathematical structure and solutions. It has been extensively investigated in various areas, including well-posedness, regularization, the Cauchy problem, the Riemann--Hilbert problem, long-time asymptotics, and the Liouville correspondence with the modified Korteweg--de Vries (KdV) equation \cite{RHP,Qu1,regularization,FORQ1,OlverQu2016,cauchy,Qu3,Tang,longtime,YF2022}. Matsuno presented a compact parametric representation of the smooth bright multisoliton solutions for the mCH equation via the Hirota's bilinear method \cite{matsuno1}, while Hu et al.\ derived its Gram-type determinant solution from the extended Kadomtsev--Petviashvili (KP) hierarchy with negative flow \cite{hu-soliton}. Several groups also constructed the smooth soliton solutions through Darboux transformation/B\"acklund transformation method \cite{FORQ2, BacklundQP, DarbouxQiao} and Lie algebraic approach \cite{Lie}. In \cite{wave-break}, the wave-breaking problem and the existence of single and multi-peakon solutions to the mCH equation have been discussed. Recently, Chang et al.\ have investigated the Lax integrability and the conservative peakon solutions in a series of work \cite{Chang1,Chang2,Chang3}. Gao et al.\ studied the patched peakon weak solution \cite{Patched}, and the conservative sticky peakons \cite{sticky-peakon}. Other related problem such as blow-up phenomena and the stability including the orbital stability have been studied by several authors \cite{oscillation,stab,orbital,blow-up}.

Recently, research on discrete integrable systems has garnered significant attention due to its connections to several other fields, including random matrices, quantum field theory, numerical algorithms, orthogonal and biorthogonal polynomials, and random matrices \cite{Disbook}. There are far fewer instances of discrete integrable systems and analytical tools available as compared to continuous integrable systems. On the other hand, discrete integrable systems are seen to be more basic and universal than continuous ones \cite{Fu-Nijhoff}. The authors have conducted extensive research in finding integrable discretizations of soliton equations, including the short pulse equation~\cite{Feng1,Feng2}, (2+1)-dimensional Zakharov equation \cite{Yu}, the Camassa--Holm (CH) equation~\cite{d-CH3,Feng3}, the Degasperis--Procesi (DP) equaiton \cite{dDP}, the generalized sine-Gordon equation~\cite{gsg1,gsg2} and the mCH equation with cubic nonlinearity and linear
dispersion term \cite{mch-SYF} via Hirota's bilinear method.

It should be commented that there exists a mCH equation with cubic nonlinearity and linear dispersion term
\begin{align}
m_t+\big[m\big(u^2-u_x^2\big)\big]_x+2\kappa^2 u_x=0,\qquad m=u-u_{xx},\label{mch1}
\end{align}
whose bilinear equations are totally different from those of equation~\eqref{mch}. The mCH equation with linear dispersion term were derived in \cite{matsuno2} and also in \cite{mch-SYF} as the reduction of the negative flow of the deformed KdV hierarchy. Although in \cite{mch-SYF} we have proposed an integrable semi-discretization of the mCH equation with linear dispersion term, i.e., equation~\eqref{mch1}, to the best of our knowledge, integrable discrete analogues of equation~\eqref{mch} (the mCH equation without linear dispersion term) have not been reported yet. There are mainly two challenging points in the construction. Firstly, bilinear equations of the mCH equation \eqref{mch} are reduced from the extended KP hierarchy with negative flow. The non-original location of one of the poles presents a challenge in constructing its discrete analogue. Secondly, as shown in Section \ref{discretization}, we have to define a second discrete counterpart for the same continuous variable in order to obtain an explicit form of the semi-discrete mCH equation. Hence, it is a natural but definitely not a~trivial problem to generate a semi-discrete version for the mCH equation \eqref{mch}.

In this paper, upon introducing appropriate Miwa transformation, we derive successfully the two sets bilinear mCH equation from the discrete KP equation. As a byproduct, integrable semi-discrete bilinear mCH equation and the corresponding Gram-type determinant solutions are obtained. Under the discrete reciprocal transformation and dependent variable transformation, an integrable semi-discrete analog of the mCH equation is given.

The outline of the paper is as follows. In Section \ref{structure}, we review the bilinear forms and
determinant solutions of the mCH equation, which can be reduced from the discrete KP equation and its $\tau$-function through a series of transformations including Miwa transformation. In Section \ref{discretization}, by scrutinizing the process in deriving the bilinear mCH equation from the discrete KP equation, we propose semi-discrete analogues of bilinear mCH equations. Based on these discrete bilinear equations, we construct an integrable semi-discrete mCH equation and present its $N$-soliton solutions. Section \ref{conclusion} is devoted to a brief summary and discussion.

\section[From the discrete KP equation to the modified Camassa--Holm equation]{From the discrete KP equation\\ to the modified Camassa--Holm equation}\label{structure}
In this section, we first review the results in \cite{hu-soliton} about the bilinear form of the mCH equation.
The mCH equation \eqref{mch} can be transformed into the following bilinear equations
\begin{align}
& \big(2 D_{{s}} D_y^2+2 D_{ {s}} D_y-4 D_y\big) g \cdot f=0, \label{BL1}\\
& \big(D_y^2+D_y\big) g \cdot f=0, \label{BL2}
\end{align}
through the reciprocal transformation
$
 x=y+ {s}+2 \ln \frac{g}{f}$, $t= {s}
$,
and the dependent variable transformation
$
u=1-(\ln f g)_{y {s}}$,
where $D_x$ is the Hirota D-operator defined by
\begin{align*}
D_x^n f \cdot g=\left(\frac{\partial}{\partial x}-\frac{\partial}{\partial y}\right)^n f(x) g(y) |_{y=x}.
\end{align*}
Next, we give a lemma regarding bilinear equations of the mCH equation \eqref{mch} and show the correspongding reductions.

\begin{Lemma}
	The following bilinear equations
	\begin{align}
	& \big(D_{x_1}^2-D_{x_2}+2 c D_{x_1}\big) \tau_n \cdot \tau_{n+1}=0,\label{mkp1}\\
	& \big(D_{x_{-1}}\big(D_{x_1}^2-D_{x_2}+2 c D_{x_1}\big)-4 D_{x_1}\big) \tau_n \cdot \tau_{n+1}=0, \label{mkp2}
	\end{align}
	admit the Gram-type determinant solutions
\smash{$
	\tau_n=\det\limits_{1\leqslant i,j \leqslant N} \big(m_{ij}^{(n)}\big)$},
	where the matrix elements are defined as
	\begin{align*}
	& m_{i j}^{(n)}=c_{i j}+\frac{1}{p_i+q_j}\left(-\frac{p_i-c}{q_j+c}\right)^{-n} {\rm e}^{\xi_i+\eta_j}, \qquad \xi_i=p_i x_1+p_i^2 x_2+\frac{1}{p_i-c} x_{-1}+\xi_{i 0}, \\
	& \eta_j=q_j x_1-q_j^2 x_2+\frac{1}{q_j+c} x_{-1}+\eta_{j 0},
	\end{align*}	
	and $c_{ij}$, $p_i$, $q_j$, $\xi_{i0}$, $\eta_{j0}$, $c$ are constants.
\end{Lemma}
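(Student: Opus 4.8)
The plan is to verify that the stated Gram-type determinant is a solution of the bilinear system \eqref{mkp1}--\eqref{mkp2} by the standard Sato-theory technology for KP-type reductions. These two equations are the lowest members of a modified KP hierarchy, so the natural strategy is to exhibit the $\tau_n$ as a specialization of a KP $\tau$-function and then check the requisite bilinear identities directly.

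First I would record the differentiation rules for the matrix entries. Writing $m_{ij}^{(n)}=c_{ij}+\frac{1}{p_i+q_j}\bigl(-\frac{p_i-c}{q_j+c}\bigr)^{-n}\mathrm{e}^{\xi_i+\eta_j}$, the phases satisfy $\partial_{x_1}\xi_i=p_i$, $\partial_{x_2}\xi_i=p_i^2$, $\partial_{x_{-1}}\xi_i=\frac{1}{p_i-c}$ and similarly $\partial_{x_1}\eta_j=q_j$, $\partial_{x_2}\eta_j=-q_j^2$, $\partial_{x_{-1}}\eta_j=\frac{1}{q_j+c}$. From these one sees that the exponential factor $E_{ij}^{(n)}:=\bigl(-\frac{p_i-c}{q_j+c}\bigr)^{-n}\mathrm{e}^{\xi_i+\eta_j}$ obeys $\partial_{x_1}E_{ij}^{(n)}=(p_i+q_j)E_{ij}^{(n)}$ and, crucially, the shift in $n$ acts multiplicatively as $E_{ij}^{(n+1)}=-\frac{q_j+c}{p_i-c}E_{ij}^{(n)}$. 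The plan is to package these into a rank-one update: the difference $\partial_{x_1}m_{ij}^{(n)}$, the shift $m_{ij}^{(n+1)}$, and the $x_{-1}$-derivative should all be expressible through the auxiliary vectors built from $\frac{1}{p_i-c}$ and $\frac{1}{q_j+c}$, so that the bilinear expressions become determinant identities.

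The core of the argument is then to introduce the bordered-determinant (Jacobi-type) representation. I would define auxiliary $(N{+}1)\times(N{+}1)$ determinants obtained by adjoining suitable rows and columns — for instance a column carrying the factors $\frac{1}{p_i-c}\mathrm{e}^{\xi_i}$ and a row carrying $\frac{1}{q_j+c}\mathrm{e}^{\eta_j}$ — and express the various derivatives and shifts $\partial_{x_1}\tau_n$, $\partial_{x_2}\tau_n$, $\partial_{x_{-1}}\tau_n$, $\tau_{n\pm1}$ in terms of them. Substituting these representations into \eqref{mkp1} and \eqref{mkp2} reduces each bilinear equation to a Plücker relation (equivalently a Jacobi determinant identity) among the bordered determinants, which holds identically. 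The factor $2cD_{x_1}$ and the shifted pole structure $\frac{1}{p_i-c}$, $\frac{1}{q_j+c}$ mean the relevant identities are those of the modified KP hierarchy rather than the ordinary one; the constant $c$ controls the displacement of the pole away from the origin, exactly the non-standard feature flagged in the introduction.

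I expect the main obstacle to be bookkeeping around the constant $c$ and the negative flow $x_{-1}$. Equation \eqref{mkp2} involves $D_{x_{-1}}$ acting on the whole cubic operator, so once the first identity \eqref{mkp1} is established the second should follow by differentiating with respect to $x_{-1}$ and re-applying a Plücker relation, but verifying that the extra term $-4D_{x_1}$ closes the computation — rather than leaving a residual — is the delicate point. Concretely, the hardest check is confirming that the $x_{-1}$-derivatives of the bordered determinants combine with the $\frac{1}{p_i-c}$ and $\frac{1}{q_j+c}$ weights to reproduce precisely the coefficient $-4$; this is where the specific pole locations, shifted by $\pm c$, must conspire correctly. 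The remaining steps are routine determinant manipulations.
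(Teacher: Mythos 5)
Your strategy---direct verification that the Gram determinant satisfies \eqref{mkp1}--\eqref{mkp2} via differentiation rules, bordered determinants and Jacobi/Pl\"ucker identities---is viable and is the classical route for statements of this kind, but it is genuinely different from the paper's. The paper proves the Lemma without ever manipulating determinants: it starts from the discrete KP (Hirota--Miwa) equation \eqref{HM}, whose Gram-type solution \eqref{gram} is already known, displaces the pole by the parameter shift $\tilde{p}_i=p_i+c$, $\tilde{q}_i=q_i-c$, $b_{1,2}^{-1}=a_{1,2}^{-1}+c$, $d=a_3^{-1}$ (yielding the deformed discrete mKP equation \eqref{dmkp}), and then applies the Miwa transformation. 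Since \eqref{dmkp} holds identically for all small $b_1$, $b_2$, $d$, every coefficient of its Schur-polynomial expansion must vanish: \eqref{mkp1} is read off at order $b_1^1b_2^0d^0$ and \eqref{mkp2} at order $b_1^1b_2^0d^1$, while the stated elements $m_{ij}^{(n)}$ are exactly the images of the HM Gram elements under the same transformation. Besides avoiding all bordered-determinant bookkeeping, this route is the backbone of the whole paper: stopping the same expansion halfway (keeping $k_1$ discrete) is precisely what produces the semi-discrete bilinear equations of Theorem~\ref{dis-sol}, so the Lemma's proof doubles as the setup for the discretization. What your route buys is self-containedness---no input from discrete KP theory---at the price of heavy determinant computations.

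One concrete warning about your sketch. The step ``once \eqref{mkp1} is established, \eqref{mkp2} should follow by differentiating with respect to $x_{-1}$'' does less than you suggest. Writing $P(D)=D_{x_1}^2-D_{x_2}+2cD_{x_1}$, differentiation of \eqref{mkp1} gives $P(D)(\partial_{x_{-1}}\tau_n)\cdot\tau_{n+1}+P(D)\tau_n\cdot(\partial_{x_{-1}}\tau_{n+1})=0$, whereas \eqref{mkp2} is the statement $P(D)(\partial_{x_{-1}}\tau_n)\cdot\tau_{n+1}-P(D)\tau_n\cdot(\partial_{x_{-1}}\tau_{n+1})=4D_{x_1}\tau_n\cdot\tau_{n+1}$. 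Combining the two, what actually remains to be proved is the independent bilinear identity $P(D)(\partial_{x_{-1}}\tau_n)\cdot\tau_{n+1}=2D_{x_1}\tau_n\cdot\tau_{n+1}$, in which $\partial_{x_{-1}}\tau_n$ is itself a bordered determinant; this is a new Jacobi-type computation (of two-dimensional Toda type, linking the $n$-shift, the $x_1$-flow and the negative flow), not a residual check of the coefficient $-4$. Note also that the borders you list, $\frac{1}{p_i-c}{\rm e}^{\xi_i}$ and $\frac{1}{q_j+c}{\rm e}^{\eta_j}$, handle only the $x_{-1}$-derivatives; the $x_1$- and $x_2$-derivatives require borders weighted by $p_i$, $p_i^2$, $q_j$, $q_j^2$ as well. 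Both points are repairable, so your proposal stands as a correct alternative plan, but the work for \eqref{mkp2} is substantially more than ``routine determinant manipulations''.
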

\begin{proof}
	The discrete Kadomtsev--Petviashvili (dKP) equation, or the Hirota--Miwa (HM) equation,
	\begin{gather}
	 a_1(a_2-a_3) \tau(k_1+1, k_2, k_3) \tau(k_1, k_2+1, k_3+1)\nonumber \\
	\qquad {}+a_2(a_3-a_1) \tau(k_1, k_2+1, k_3) \tau(k_1+1, k_2, k_3+1)\nonumber \\
	\qquad {}+a_3(a_1-a_2) \tau(k_1, k_2, k_3+1) \tau(k_1+1, k_2+1, k_3)=0,\label{HM}
	\end{gather}
	was proposed independently by Hirota \cite{Hirota-1981} and Miwa \cite{Miwa-1982} in early 1980s. It is known that the discrete KP equation admits a general solution in terms of the following Gram-type determinant~\cite{OHTI_JPSJ}
		\begin{align}	\tau(k_1,k_2,k_3,k_4)=\left|m_{ij}\right|=\left|c_{ij}+\frac{1}{p_i+q_j} \left(-\frac{p_i}{q_j}\right)^{-k_4}\prod_{l=1}^3\left(\frac{1-a_l p_i}{1+a_l q_j}\right)^{-k_l}\right|_{1\leqslant i,j \leqslant N}.\label{gram}
		\end{align}
		Here $k_4$ remains unchanged in each of the tau-function so it is hidden in \eqref{HM}.	Notice that the element in Gram-type solution \eqref{gram} of the discrete KP equation \eqref{HM} can be rewritten as
		\begin{align*}
		m_{ij}={}
		& c_{ij}+\frac{1}{p_i+q_j} \left(-\frac{p_i}{q_j}\right)^{-k_4} \left(\frac{1-a_1 p_i}{1+a_1 q_j}\right)^{-k_1}\left(\frac{1-a_2 p_i}{1+a_2 q_j}\right)^{-k_2}\left(\frac{1-a_3 p_i}{1+a_3 q_j}\right)^{-k_3} \\
		={}& c_{ij}+\frac{1}{p_i+q_j}\left(-\frac{p_i}{q_j}\right)^{-(k_3+k_4)}\left(\frac{1-a_1 p_i}{1+a_1 q_j}\right)^{-k_1}\left(\frac{1-a_2 p_i}{1+a_2 q_j}\right)^{-k_2}\left(\frac{1-a_3^{-1} p_i^{-1}}{1+a_3^{-1} q_j^{-1}}\right)^{-k_3} \\
		={}& c_{ij}+\frac{1}{\tilde{p}_i+\tilde{q}_j}\left(-\frac{\tilde{p}_i-c}{\tilde{q}_j+c}\right)^{-(k_3+k_4)}\left(\frac{1-b_1 \tilde{p}_i}{1+b_1 \tilde{q}_j}\right)^{-k_1}\left(\frac{1-b_2 \tilde{p}_i}{1+b_2 \tilde{q}_j}\right)^{-k_2}\left(\frac{1-d p_i^{-1}}{1+d q_j^{-1}}\right)^{-k_3},
		\end{align*}
		where $\tilde{p}_i=p_i+c$, $\tilde{q}_i=q_i-c$, $b_1^{-1}=a_1^{-1}+c$, $b_2^{-1}=a_2^{-1}+c$, $d=a_3^{-1}$. 
		Redefine $n=k_3+k_4$, we arrive at a 4 dimension equation.
		In solution, we let $\tilde{p}_i \to p_i$, $\tilde{q}_j \to q_j$.
		Thus, the dKP equation~\eqref{HM} becomes the discrete deformed modified KP equation
		\begin{gather}
		 \big(d-b_2^{-1}+c\big) \tau_n(k_1+1, k_2, { {k_3}}) \tau_{n+1}(k_1, k_2+1, { {k_3}}+1)\nonumber \\
		\qquad {}+\big(b_1^{-1}-c-d\big) \tau_n(k_1, k_2+1, { {k_3}}) \tau_{n+1}(k_1+1, k_2, { {k_3}}+1)\nonumber \\
		\qquad {}+\big(b_2^{-1}-b_1^{-1}\big) \tau_{n+1}(k_1, k_2, { {k_3}}+1) \tau_n(k_1+1, k_2+1, { {k_3}})=0,\label{dmkp}
		\end{gather}
	 which admits the Gram-type determinant solutions
\smash{$
		\tau_n(k_1,k_2,k_3)=\left|m_{ij}\right|_{1\leqslant i,j\leqslant N}$},
		where
		\begin{align*}
		m_{ij}=c_{ij}+\frac{1}{{p}_i+{q}_j}\left(-\frac{{p}_i-c}{{q}_j+c}\right)^{-n}\left(\frac{1-b_1 {p}_i}{1+b_1 {q}_j}\right)^{-k_1}\left(\frac{1-b_2 {p}_i}{1+b_2 {q}_j}\right)^{-k_2}\left(\frac{1-d (p_i-c)^{-1}}{1+d (q_j+c)^{-1}}\right)^{-k_3}.
		\end{align*}
	Applying Miwa transformation
	\begin{align*}
	&x_1=\sum\limits_{j=1}^2 k_jb_j,\qquad x_2=\frac{1}{2}\sum\limits_{j=1}^2k_jb_j^2,\qquad \ldots, \qquad x_k=\frac{1}{k}\sum_{j=1}^2k_jb_j^k,\\
	&x_{-1}={ {k_3}}d,\qquad { x_{-2}}=\frac{1}{2} { {k_3}}d^2,\qquad \ldots,\qquad x_{-k}=\frac{1}{k} { {k_3}}d^k,\qquad {\ k=1,2,\ldots,}
	\end{align*}
	we obtain an infinite number of bilinear equations
	 	\begin{align}
		& \sum_{K,L,M=0}^{\infty}\big(d-b_2^{-1}+c\big) b_1^K b_2^L d^M S_K\left(\frac{1}{2} \tilde{D}_{+}\right) S_L\left(-\frac{1}{2} \tilde{D}_{+}\right) S_M\left(-\frac{1}{2} \tilde{D}_{-}\right) \tau_n \cdot \tau_{n+1} \notag\\
		& \qquad{}+\sum_{K,L,M=0}^{\infty}\big(b_1^{-1}-c-d\big) b_1^K b_2^L d^M S_K\left(-\frac{1}{2} \tilde{D}_{+}\right) S_L\left(\frac{1}{2} \tilde{D}_{+}\right) S_M\left(-\frac{1}{2}\tilde{D}_{-}\right) \tau_n \cdot \tau_{n+1} \notag\\
		& \qquad{}+\sum_{K,L,M=0}^{\infty}\big(b_2^{-1}-b_1^{-1}\big) b_1^K b_2^L d^M S_K\left(\frac{1}{2} \tilde{D}_{+}\right) S_L\left(\frac{1}{2} \tilde{D}_{+}\right) S_M\left(-\frac{1}{2} \tilde{D}_{-}\right) \tau_n \cdot \tau_{n+1}=0,\notag
		\end{align}
		where
		\begin{align*}
		\tilde{D}_{+}=\left(D_{x_1}, \frac{1}{2}D_{x_2},\dots,\frac{1}{n}D_{x_n},\dots\right),\qquad \tilde{D}_{-}=\left(D_{x_{-1}}, \frac{1}{2}D_{x_{-2}},\dots,\frac{1}{n}D_{x_{-n}},\dots\right).
		\end{align*}
		Here $S_n(\mathbf{x})$ are the elementary Schur polynomials which are defined via the generating function,{\samepage
		\begin{align*}
		\sum_{n=0}^{\infty}S_n(\mathbf{x})\lambda^n=\exp\left(\sum_{k=1}^{\infty}x_k\lambda^k\right),\qquad \mathbf{x}=(x_1,x_2,\dots).
		\end{align*}
		For example, we have
$S_0(\mathbf{x})=1$, $S_1(\mathbf{x})=x_1$, $S_2(\mathbf{x})=\frac{1}{2}x_1^2+x_2$, \dots.}

	At the order of $b_1^1b_2^0d^0$, we have
	 \begin{align*}
		&\left(-S_1\left(\frac{1}{2} \tilde{D}_{+}\right) S_1\left(-\frac{1}{2} \tilde{D}_{+}\right)+c S_1\left(\frac{1}{2} \tilde{D}_{+}\right)+S_2\left(-\frac{1}{2} \tilde{D}_{+}\right)\right.\\
		&\left.\qquad {}-c S_1\left(-\frac{1}{2} \tilde{D}_{+}\right)+S_1^2\left(\frac{1}{2} \tilde{D}_{+}\right)-S_2\left(\frac{1}{2} \tilde{D}_{+}\right)\right) \tau_n \cdot \tau_{n+1}=0,
		\end{align*}
	which gives equation \eqref{mkp1}.
	
	At the order of $b_1^1b_2^0 d^1$, we have
	 	\begin{align*}
		& \left(S_1\left(\frac{1}{2} \tilde{D}_{+}\right)-S_1\left(\frac{1}{2} \tilde{D}_{+}\right) S_1\left(-\frac{1}{2} \tilde{D}_{+}\right) S_1\left(-\frac{1}{2} \tilde{D}_{-}\right)+c S_1\left(\frac{1}{2} \tilde{D}_{+}\right) S_1\left(-\frac{1}{2} \tilde{D}_{-}\right)\right. \\
		& \qquad{}+S_2\left(-\frac{1}{2} \tilde{D}_{+}\right) S_1\left(-\frac{1}{2} \tilde{D}_{-}\right)-c S_1\left(-\frac{1}{2} \tilde{D}_{+}\right) S_1\left(-\frac{1}{2} \tilde{D}_{-}\right)-S_1\left(-\frac{1}{2} \tilde{D}_{+}\right) \\
		&\left.\qquad{}+S_1^2\left(\frac{1}{2} \tilde{D}_{+}\right) S_1\left(-\frac{1}{2} \tilde{D}_{-}\right)-S_2\left(\frac{1}{2} \tilde{D}_{+}\right) S_1\left(-\frac{1}{2} \tilde{D}_{-}\right)\right) \tau_n \cdot \tau_{n+1}=0,
		\end{align*}
	which leads to equation \eqref{mkp2}. The proof is complete.
\end{proof}

If we impose the constraints
$p_i=q_i$, $c_{ij}=\delta_{ij}$,
one can verify that $\partial_{x_2}\tau_n=0$. Setting $\tau_1=g$, $\tau_0=f$, we obtain from \eqref{mkp1}--\eqref{mkp2} the following bilinear equations
$\big(D_{x_1}^2+2 c D_{x_1}\big) f \cdot g=0$, $
 \big(D_{x_{-1}}\big(D_{x_1}^2+2 c D_{x_1}\big)-4 D_{x_1}\big) f \cdot g=0$.
Furthermore, by setting $x_1=y$, $x_{-1}=\frac{ {s}}{2}$ and $c=-\frac{1}{2}$, we arrive at the bilinear equations of the mCH equation \eqref{BL1}--\eqref{BL2}. Thus $\tau$-functions $f$ and $g$ admit the following Gram-type determinant form:
\begin{align*}
&\tau_n=\left|\delta_{ij}+\frac{1}{p_i+p_j}\left(-\frac{2p_i+1}{2p_j-1}\right)^{-n}{\rm e}^{\xi_i+\eta_j}\right|,\qquad
\xi_i=p_iy+\frac{1}{2p_i+1} {s}+\xi_{i0},\\
&\eta_j=p_jy+\frac{1}{2{{p_j}}-1} {s}+\eta_{j0},
\end{align*}
with $g=\tau_1$, $f=\tau_0$.

\section[Integrable semi-discretization of the modified Camassa--Holm equation]{Integrable semi-discretization\\ of the modified Camassa--Holm equation}\label{discretization}

In this section, we aim to construct the integrable spatial discretization of the mCH equation. To this end, we shall first derive semi-discrete analogs of the bilinear equations \eqref{BL1}--\eqref{BL2}. Subsequently, in Section \ref{dis-eqn}, we construct an integrable semi-discrete mCH equation.

\subsection[From discrete KP equation to the semi-discrete analog of (2.4) and (2.5)]{From discrete KP equation to the semi-discrete analog of (\ref{mkp1}) and (\ref{mkp2})}\label{dis-bl}
\begin{Lemma}
	The discrete KP equation \eqref{HM} generates the following bilinear equations:
	\begin{gather}
	 \bigl(-b_2^{-1}+c\bigr) \tau_n(k+1, l) \tau_{n+1}(k, l+1)+\big(b_1^{-1}-c\big) \tau_n(k, l+1) \tau_{n+1}(k+1, l)\nonumber \\
	\qquad{} +\big(b_2^{-1}-b_1^{-1}\big) \tau_n(k+1, l+1) \tau_{n+1}(k, l)=0,\label{d-dmkp1}
\\
	 \bigl(-b_2^{-1}+c\bigr) D_{ {x_{-1}}} \tau_n(k+1, l) \cdot \tau_{n+1}(k, l+1)+\big(b_1^{-1}-c\big) D_{ {x_{-1}}} \tau_n(k, l+1) \cdot \tau_{n+1}(k+1, l) \nonumber\\
	\qquad{} +\big(b_2^{-1}-b_1^{-1}\big) D_{ {x_{-1}}} \tau_n(k+1, l+1) \cdot \tau_{n+1}(k, l)+2(\tau_n(k, l+1) \tau_{n+1}(k+1, l)\nonumber\\
	\qquad{}-\tau_n(k+1, l) \tau_{n+1}(k, l+1))=0, \label{d-dmkp2}
	\end{gather}
	which admit the determinant solution of Gram-type
	\begin{align}
	\tau_n(k,l)&=\big|m_{ij}^{n,k,l}\big|\nonumber\\
&=\left|c_{ij}+\frac{1}{p_i+q_j}\left(-\frac{p_i-c}{q_j+c}\right)^{-n}\left(\frac{1-b_1 {p}_i}{1+b_1 {q}_j}\right)^{-k}\left(\frac{1-b_2 {p}_i}{1+b_2 {q}_j}\right)^{-l} {\rm e}^{\xi_i+\eta_j}\right|,\label{dis-sol-1}
	\end{align}
	where
$
	\xi_i=\frac{1}{p_i-c} x_{-1}+\xi_{i 0}$, $\eta_j=\frac{1}{q_j+c} x_{-1}+\eta_{j 0}$.
\end{Lemma}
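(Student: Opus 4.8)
The final statement (the second Lemma) claims that the discrete KP/Hirota–Miwa equation \eqref{HM} generates the two bilinear equations \eqref{d-dmkp1} and \eqref{d-dmkp2}, with the Gram-type determinant solution \eqref{dis-sol-1}. Comparing with the first Lemma, I see that \eqref{d-dmkp1} is precisely the discrete deformed modified KP equation \eqref{dmkp} after suppressing the $k_3$ dependence (i.e., setting $k_3$ fixed, renaming $k_1 \to k$, $k_2 \to l$), while \eqref{d-dmkp2} is the differential–difference counterpart obtained by applying a derivative in the negative flow variable $x_{-1}$. My plan is to mirror the reduction carried out in the proof of the first Lemma, but stop the Miwa transformation \emph{before} converting the $k_1, k_2$ lattice directions into continuous flows, keeping $k$ and $l$ discrete while introducing only the single continuous variable $x_{-1}$ attached to the $k_3$ direction.

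**Step-by-step plan.**

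First I would take the discrete deformed modified KP equation \eqref{dmkp} together with its Gram-type solution, exactly as produced in the first Lemma, and apply the partial Miwa transformation only in the negative direction: set $x_{-1} = k_3 d$, so that shifting $k_3 \to k_3 + 1$ corresponds to the Miwa shift generating the $x_{-1}$ flow. Equation \eqref{d-dmkp1} should then emerge directly as the $d^0$ (lowest) order term of this expansion, where the $k_3$-shift collapses and the coefficients reduce from $(d - b_2^{-1} + c)$, etc., to $(-b_2^{-1} + c)$, $(b_1^{-1} - c)$, $(b_2^{-1} - b_1^{-1})$. Second, I would extract the order-$d^1$ term of the same Miwa expansion; the Schur-polynomial identity $S_1(\tfrac12 \tilde D_-) = \tfrac12 D_{x_{-1}}$ converts the $k_3$-shift into a single $D_{x_{-1}}$ Hirota derivative acting on the bilinear products, yielding \eqref{d-dmkp2}, with the extra undifferentiated term $2(\tau_n(k,l+1)\tau_{n+1}(k+1,l) - \tau_n(k+1,l)\tau_{n+1}(k,l+1))$ arising from the first-order contribution of the $d$-dependent coefficients in \eqref{dmkp}. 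Third, for the solution \eqref{dis-sol-1}, I would substitute $x_{-1} = k_3 d$ into the matrix entries of the solution to \eqref{dmkp}, observe that the factor $\bigl(\tfrac{1 - d(p_i-c)^{-1}}{1 + d(q_j+c)^{-1}}\bigr)^{-k_3}$ exponentiates under the Miwa parametrization into $\exp(\xi_i + \eta_j)$ with $\xi_i = \frac{1}{p_i - c}x_{-1} + \xi_{i0}$ and $\eta_j = \frac{1}{q_j + c}x_{-1} + \eta_{j0}$, matching \eqref{dis-sol-1} after relabeling $k_1 \to k$, $k_2 \to l$.

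**The main obstacle.**

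The routine parts are the coefficient bookkeeping in \eqref{d-dmkp1} and the exponentiation producing the solution. The genuinely delicate step is deriving \eqref{d-dmkp2}: I must carefully expand the $k_3$-shifted bilinear products in \eqref{dmkp} to first order in $d$, simultaneously tracking (i) the explicit $d$ appearing inside the coefficients $(d - b_2^{-1} + c)$ and $(b_1^{-1} - c - d)$, and (ii) the $d$-linear term produced by the Miwa shift $\tau_{n}(k_3+1) \approx \tau_n + \tfrac12 D_{x_{-1}}$-type contribution when acting bilinearly. The cross terms between these two sources of $d$ are what generate the clean combination in \eqref{d-dmkp2}, including the sign structure of the undifferentiated remainder, and verifying that all spurious $d$-linear pieces cancel is the crux of the argument. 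I would handle this by writing the Miwa expansion of each bilinear product through order $d^1$ and collecting, then confirming that the surviving terms reorganize into the stated $D_{x_{-1}}$-Hirota form plus the residual difference term.
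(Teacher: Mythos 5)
Your proposal is correct and follows essentially the same route as the paper: the authors likewise apply the Miwa transformation only in the negative direction, $x_{-1}=k_3 d$, $x_{-2}=\tfrac12 k_3 d^2,\dots$, to the discrete deformed modified KP equation \eqref{dmkp}, and read off \eqref{d-dmkp1} at order $d^0$ and \eqref{d-dmkp2} at order $d^1$, with the Gram solution \eqref{dis-sol-1} inherited from that of \eqref{dmkp}. Your identification of the two sources of $d$-linear terms (the explicit $d$ in the coefficients yielding the undifferentiated remainder, and $S_1\bigl(-\tfrac12\tilde D_-\bigr)=-\tfrac12 D_{x_{-1}}$ yielding the Hirota-derivative terms) is exactly the bookkeeping that makes the paper's one-line extraction work.
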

\begin{proof}
	 	We apply the Miwa transformation to equation~\eqref{dmkp},
		\begin{align*}
		&x_{-1}={ {k_3}}d,\qquad { x_{-2}}=\frac{1}{2} { {k_3}}d^2,\qquad\dots, \qquad x_{-k}=\frac{1}{k} { {k_3}}d^k, \qquad {\ k=1,2,\dots,}
		\end{align*}
		then we have
		\begin{align*}
		& \sum_{M=0}^{\infty}\big(d-b_2^{-1}+c\big) d^M S_M\left(-\frac{1}{2} \tilde{D}_{-}\right) \tau_n(k_1+1, k_2) \cdot \tau_{n+1}(k_1, k_2+1) \\
		&\qquad {}+\sum_{M=0}^{\infty}\big(b_1^{-1}-c-d\big) d^M S_M\left(-\frac{1}{2} \tilde{D}_{-}\right) \tau_n(k_1, k_2+1) \cdot \tau_{n+1}(k_1+1, k_2) \\
		&\qquad {}+\sum_{M=0}^{\infty}\big(b_2^{-1}-b_1^{-1}\big) d^M S_M\left(-\frac{1}{2} \tilde{D}_{-}\right) \tau_n(k_1+1, k_2+1) \cdot \tau_{n+1}(k_1, k_2)=0.
		\end{align*}
	At the order of $d^0$ and $d^1$, we obtain equation \eqref{d-dmkp1} and \eqref{d-dmkp2} with $k_1=k$, $k_2=l$, respectively.
\end{proof}

\begin{Theorem}\label{dis-sol}
	Bilinear equations
	\begin{align}
	& \frac{1}{b}\left(f_{k+1} g_{k-1}-2 f_k g_k+f_{k-1} g_{k+1}\right)-\frac{1}{2}\left(f_{k+1} g_{k-1}-f_{k-1} g_{k+1}\right)=0 ,\label{dis-1}\\
	& \frac{2}{b} D_{ {s}}\left(f_{k+1} \cdot g_{k-1}-2 f_k \cdot g_k+f_{k-1} \cdot g_{k+1}\right) \nonumber\\
	& \qquad{}- D_{ {s}}\left(f_{k+1} \cdot g_{k-1}-f_{k-1} \cdot g_{k+1}\right)-2\left(f_{k+1} g_{k-1}-f_{k-1} g_{k+1}\right)=0.\label{dis-2}
	\end{align}
	admit the Gram-type determinant solution
	\begin{gather}
	f_k=\tau_0(k),\quad g_k=\tau_1(k),\nonumber\\
	\tau_n(k)=\big|{ m_{ij}^{n,k}}\big|=\left|\delta_{ij}+\frac{1}{p_i+p_j}\left(-\frac{2p_i+1}{2p_j-1}\right)^{-n}\left(\frac{1-b {p}_i}{1+b {p}_j}\right)^{-k}{\rm e}^{\xi_i+\eta_j}\right|,\label{gram2}
	\end{gather}
	where
	\begin{align}
	\xi_i=\frac{1}{2p_i+1} {s}+\xi_{i 0}, \qquad \eta_j=\frac{1}{2p_j-1} {s}+\eta_{j 0}.
	\end{align}
\end{Theorem}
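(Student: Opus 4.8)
The plan is to obtain Theorem~\ref{dis-sol} as a reduction of the preceding Lemma, exactly paralleling the passage from \eqref{mkp1}--\eqref{mkp2} to \eqref{BL1}--\eqref{BL2} carried out in Section~\ref{structure}. First I would impose on the Gram solution \eqref{dis-sol-1} the constraints $q_i=p_i$ and $c_{ij}=\delta_{ij}$, and fix $c=-\tfrac12$ together with $x_{-1}=\tfrac{s}{2}$ and $b_1=b$. Under these choices the spectral factor $\bigl(-\tfrac{p_i-c}{q_j+c}\bigr)^{-n}$ becomes $\bigl(-\tfrac{2p_i+1}{2p_j-1}\bigr)^{-n}$ and the exponents collapse to $\xi_i=\tfrac{1}{2p_i+1}s+\xi_{i0}$, $\eta_j=\tfrac{1}{2p_j-1}s+\eta_{j0}$, so that once the auxiliary lattice direction is frozen (evaluating at $l=0$) the determinant \eqref{dis-sol-1} becomes precisely \eqref{gram2}. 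This identifies the claimed solution and pins down all the parameters; what then remains is to show that the two-variable bilinear equations \eqref{d-dmkp1}--\eqref{d-dmkp2} descend to the one-variable equations \eqref{dis-1}--\eqref{dis-2}.

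The decisive step is to establish the semi-discrete counterpart of the continuous reduction $\partial_{x_2}\tau_n=0$, which in Section~\ref{structure} allowed the $D_{x_2}$ term of \eqref{mkp1} to be dropped. In the fully discrete setting the would-be $x_2$-flow is encoded, through the Miwa transformation, jointly in the $b_1$- and $b_2$-lattice directions, and imposing $q_i=p_i$ should render this flow trivial, thereby making the $b_2$-direction ($l$) redundant with the $b_1$-direction ($k$). Concretely, I would use the constraint to trade every $l$-shifted factor in \eqref{d-dmkp1} for $k$-shifts and the continuous $x_{-1}=\tfrac{s}{2}$ flow, after which the three bilinear products reorganize into $f_{k+1}g_{k-1}$, $f_kg_k$, $f_{k-1}g_{k+1}$ with exactly the coefficients $\tfrac1b-\tfrac12$, $-\tfrac2b$, $\tfrac1b+\tfrac12$ appearing in \eqref{dis-1}. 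The antisymmetry $D\,f\cdot g=-D\,g\cdot f$ is what converts the odd combination into the first-difference term $-\tfrac12(f_{k+1}g_{k-1}-f_{k-1}g_{k+1})$, in direct analogy with the identity $(D_y^2-D_y)f\cdot g=(D_y^2+D_y)g\cdot f$ that produced \eqref{BL2}. Running the same computation on \eqref{d-dmkp2}, where $D_{x_{-1}}$ (proportional to $D_s$ under $x_{-1}=\tfrac{s}{2}$) already appears, should yield the $D_s$-weighted second difference together with the inhomogeneous term $-2(f_{k+1}g_{k-1}-f_{k-1}g_{k+1})$ of \eqref{dis-2}.

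As an independent and self-contained check, I would also verify \eqref{dis-1}--\eqref{dis-2} directly from \eqref{gram2}. Writing $m_{ij}^{n,k}=\delta_{ij}+\tfrac{1}{p_i+p_j}A_iB_j$ with $A_i$, $B_j$ absorbing all the $n$-, $k$- and $s$-dependence, the shifts $k\mapsto k\pm1$ and the derivative $\partial_s$ act on the rank-structured part by rescaling the row-$i$ and column-$j$ factors, so each of $f_{k\pm1}$, $g_{k\pm1}$ and $\partial_s\tau_n$ can be expressed as a single bordered determinant. The bilinear relations \eqref{dis-1}--\eqref{dis-2} then follow from a Plücker relation (equivalently a Jacobi identity) among these bordered determinants, which is precisely the mechanism underlying the Lemma's solution of \eqref{d-dmkp1}--\eqref{d-dmkp2}.

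The main obstacle is the elimination of the auxiliary lattice variable $l$. Unlike the continuous case, where $\partial_{x_2}\tau_n=0$ is a clean determinant identity, here the $b_2$-direction is a genuine, non-frozen $x_1$-type translation, so one cannot simply assert $l$-independence. The delicate point is to confirm that, under $q_i=p_i$ with $c=-\tfrac12$, the $l$-shifts in \eqref{d-dmkp1}--\eqref{d-dmkp2} truly close back onto the $k$-lattice, rather than spilling into a Toda-type relation that also involves $\tau_2$, while simultaneously keeping exact track of the $D_s$ contributions. I expect the bookkeeping of these mixed shift/derivative terms, and the recovery of the precise coefficients $\tfrac1b\pm\tfrac12$, to be where the real work lies.
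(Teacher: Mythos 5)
Your overall strategy---obtaining Theorem~\ref{dis-sol} by reducing the Lemma's bilinear equations \eqref{d-dmkp1}--\eqref{d-dmkp2} and their Gram solution \eqref{dis-sol-1}---is exactly the paper's route, but your proposal omits the one idea that makes the reduction work, and you yourself flag precisely this point as an unresolved ``main obstacle.'' The paper's 2-reduction is not just $p_i=q_i$, $c_{ij}=\delta_{ij}$, $c=-\tfrac12$, $x_{-1}=\tfrac{s}{2}$, $b_1=b$: the decisive extra constraint is $b_2=-b_1=-b$. With that choice the $l$-dependent factor in \eqref{dis-sol-1} becomes $\bigl(\tfrac{1+bp_i}{1-bp_j}\bigr)^{-l}$, and a simultaneous unit shift of $k$ and $l$ multiplies the non-diagonal part of the matrix entry by $\tfrac{1-b^2p_j^2}{1-b^2p_i^2}$, a ratio of a column factor and a row factor that equals $1$ on the diagonal; hence the determinant satisfies $\tau_n(k+1,l+1)=\tau_n(k,l)$. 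This identity is what allows one to trade every $l$-shift for an inverse $k$-shift, $\tau_n(k,l+1)=\tau_n(k-1,l)$, drop the index $l$ entirely, and read off \eqref{dis-1}--\eqref{dis-2} from \eqref{d-dmkp1}--\eqref{d-dmkp2} with the coefficients $-b_2^{-1}+c=\tfrac1b-\tfrac12$, $b_1^{-1}-c=\tfrac1b+\tfrac12$, $b_2^{-1}-b_1^{-1}=-\tfrac2b$. Note in particular that $-b_2^{-1}+c$ evaluates to $\tfrac1b-\tfrac12$ only because $b_2=-b$; with $b_2$ left unspecified, as in your write-up, the constants in \eqref{dis-1}--\eqref{dis-2} cannot come out right.

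Your proposed substitute---``freezing'' the auxiliary direction by evaluating at $l=0$---does not work: equations \eqref{d-dmkp1}--\eqref{d-dmkp2} intrinsically couple $l$ and $l+1$, so one cannot evaluate them at a fixed $l$ without first knowing how $\tau_n(\cdot,l+1)$ relates back to the $k$-lattice, and under $p_i=q_i$ alone (with generic $b_2$) no such relation exists; your worry that the equations might ``spill into a Toda-type relation'' is exactly what happens in that case. Your fallback of verifying \eqref{dis-1}--\eqref{dis-2} directly from \eqref{gram2} via bordered determinants and Pl\"ucker relations is a legitimate alternative in principle, but you only sketch it, so it neither supplies the missing reduction nor constitutes a proof on its own. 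In short: right strategy, genuine gap---the closure of the $l$-shifts onto the $k$-lattice---and its resolution is the specific parameter choice $b_1=-b_2=b$ that your proposal never makes.
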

\begin{proof}
	To realize the 2-reduction in the discrete case, we set
$b_1=-b_2=b$, $p_i=q_i$, $c_{ij}=\delta_{ij}$,
	in~\eqref{dis-sol-1}. Under these constraints, we have the reduction relation
$\tau_n(k+1,l+1)=\tau_n(k,l)$.
	From the reduction, we drop the index $l$ and define
$f_k=\tau_0(k)$, $g_k=\tau_1(k)$.
	Then from \eqref{d-dmkp1}--\eqref{d-dmkp2}, we have
	\begin{align}
	& \frac{1}{b}(f_{k+1} g_{k-1}-2 f_k g_k+f_{k-1} g_{k+1})+c(f_{k+1} g_{k-1}-f_{k-1} g_{k+1})=0 ,\label{s-b1}\\
	& \frac{1}{b} D_{ {x_{-1}}}(f_{k+1} \cdot g_{k-1}-2 f_k \cdot g_k+f_{k-1} \cdot g_{k+1}) \nonumber\\
	& \qquad{}+c D_{ {x_{-1}}}(f_{k+1} \cdot g_{k-1}-f_{k-1} \cdot g_{k+1})-2(f_{k+1} g_{k-1}-f_{k-1} g_{k+1})=0.\label{s-b2}
	\end{align}
	By setting $x_{-1}=\frac{ {s}}{2}$ and $c=-\frac{1}{2}$, equations \eqref{s-b1}--\eqref{s-b2} are transformed into \eqref{dis-1}--\eqref{dis-2}. Gram determinant solution \eqref{gram2} can be obtained directly by using the reduction from \eqref{dis-sol-1}.
\end{proof}

\subsection{Integrable semi-discretization of the mCH equation}\label{dis-eqn}
Based on the semi-discrete bilinear equations in Theorem \ref{dis-sol}, we propose an integrable semi-discrete mCH equation.
\begin{Theorem}\label{dmch}
	An integrable semi-discrete analogue of the mCH equation \eqref{mch} is derived as
	\begin{align}
	&\frac{\mathrm{d} {m}^{-1}_k}{\mathrm{d} t}=2 {m}_k \Gamma_k { \left(\frac{ u_{k+1}-u_k}{b}\right)},\label{dmch-1}\\
	&m_k =\frac{u_{k+1}+u_k}{2}-\frac{1}{2} {m}_k\left(1+\frac{b^2}{4}\big(m_k^{-1}-1\big)\right){ \frac{\mathrm{d}}{\mathrm{d} t}\left(\frac{\tilde{m}_k^{-1}
			-\tilde{m}_{k-1}^{-1}}{b}\right)},\label{dmch-2}
	\end{align}
	from equations \eqref{dis-1}--\eqref{dis-2} through a dependent variable transformation
\smash{$
	u_k=1-\frac{1}{b}\bigl(\ln \frac{g_k f_k}{g_{k-1} f_{k-1}}\bigr)_{ {s}}$},
	and a discrete reciprocal transformation
	\begin{align*}
	\delta x_k=\frac{x_{k+1}-x_k}{b}=1+\frac{2}{b} \frac{f_{k-1} g_{k+1}-f_{k+1} g_{k-1}}{f_{k-1} g_{k+1}+f_{k+1} g_{k-1}},\qquad x_k=x_0+b\sum_{l=0}^{k-1} \delta x_l,\qquad t={ {s}}.
	\end{align*}
	Other variables are defined by
	\begin{align*}
	&\tilde{x}_k=k b+{ {s}}+2 \ln \frac{g_k}{f_k},\qquad
	{m}_k^{-1}=\delta x_k=1+\frac{2}{b} \frac{f_{k-1} g_{k+1}-f_{k+1} g_{k-1}}{f_{k-1} g_{k+1}+f_{k+1} g_{k-1}},\nonumber\\
	&\delta u_k=\frac{u_{k+1}-u_k}{b}=-\frac{1}{b^2}\left(\ln \frac{f_{k+1} g_{k-1} f_{k-1} g_{k+1}}{f_k^2 g_k^2}\right)_{ {s}},\\
	&\tilde{m}_k^{-1}=\delta \tilde{x}_k=\frac{\tilde{x}_{k+1}-\tilde{x}_k}{b}=1+\frac{2}{b} \ln \frac{g_{k+1} f_k}{g_k f_{k+1}}, \\
	&\Gamma_k=1+\frac{m_k^{-1}-1}{4}b-\frac{\big(m_k^{-1}-1\big)^2}{4}b^2-\frac{\big(m_k^{-1}-1\big)^3}{16}b^3,\\
	&\delta\big(\tilde{m}_k^{-1}\big)=\frac{\tilde{m}_k^{-1}-\tilde{m}_{k-1}^{-1}}{b}=-\frac{2}{b^2} \ln \frac{f_{k+1} f_{k-1} g_k^2}{g_{k+1} g_{k-1} f_k^2}.
	\end{align*}
\end{Theorem}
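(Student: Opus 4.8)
The plan is to prove Theorem~\ref{dmch} by direct reduction: I would express every quantity appearing in \eqref{dmch-1}--\eqref{dmch-2} through the tau-functions $f_k=\tau_0(k)$, $g_k=\tau_1(k)$ and their $s$-derivatives, and then show that the bilinear identities of Theorem~\ref{dis-sol} force exactly these relations. Abbreviating $A=f_{k+1}g_{k-1}$, $B=f_{k-1}g_{k+1}$ and introducing the logarithmic derivatives $P_k=(\ln f_kg_k)_s$ and $R_k=(\ln(f_k/g_k))_s$, the data split into two sectors. The $u$-sector is controlled by second differences of $P$: from $u_k=1-\frac1b(P_k-P_{k-1})$ one gets $\delta u_k=-\frac1{b^2}(P_{k+1}-2P_k+P_{k-1})$. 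The position sector lives in $R$: a short computation gives $\frac{d}{dt}m_k^{-1}=-\frac{4AB}{b(A+B)^2}(R_{k+1}-R_{k-1})$ and $\frac{d}{dt}\delta(\tilde m_k^{-1})=-\frac2{b^2}(R_{k+1}-2R_k+R_{k-1})$. Proving the theorem thus amounts to bridging the two sectors by means of the bilinear equations.

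First I would record the purely algebraic consequences of \eqref{dis-1}. Clearing $b$ gives the reciprocal-transformation identity $2f_kg_k=(1-\frac b2)A+(1+\frac b2)B$, which legitimizes $\delta x_k=m_k^{-1}$; it also yields $1+\frac{b^2}{4}(m_k^{-1}-1)=\frac{2f_kg_k}{A+B}$, the exact combination occurring in \eqref{dmch-2}. Writing $r=\frac{B-A}{A+B}$, so that $m_k^{-1}-1=\frac2b r$, substitution collapses the cubic-in-$b$ polynomial defining $\Gamma_k$ to the factorized form $\Gamma_k=(1-r^2)(1+\frac r2)=\frac{2AB(A+3B)}{(A+B)^3}$. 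These reductions are routine and constitute the first milestone.

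The two bridges come from the bilinear equations. For the constitutive relation \eqref{dmch-2} I would use the $D_s$-equation \eqref{dis-2}, which after reduction reads $2f_kg_k(R_{k+1}-2R_k+R_{k-1})=-\big((1-\frac b2)A-(1+\frac b2)B\big)(P_{k+1}-P_{k-1})+2b(A-B)$. Inserting this, together with $m_k$, the relation $\frac{u_{k+1}+u_k}{2}=1-\frac1{2b}(P_{k+1}-P_{k-1})$ and the identity $1+\frac{b^2}{4}(m_k^{-1}-1)=\frac{2f_kg_k}{A+B}$, into the right-hand side of \eqref{dmch-2}, the $(P_{k+1}-P_{k-1})$ terms cancel because $(1-\frac b2)A-(1+\frac b2)B=-\frac12\big((b-2)A+(b+2)B\big)$, and the remainder, after using \eqref{dis-1} once more, reduces to the identity $b^2(A+B)=b\big((b-2)A+(b+2)B\big)+2b(A-B)$, thereby establishing \eqref{dmch-2}. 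For the evolution \eqref{dmch-1} I would differentiate \eqref{dis-1} once in $s$, obtaining a relation that expresses the first difference $R_{k+1}-R_{k-1}$, hence $\frac{d}{dt}m_k^{-1}$, through the second difference $P_{k+1}-2P_k+P_{k-1}$, i.e.\ through $\delta u_k$; substituting this together with $m_k$ and the factorized $\Gamma_k$ should reduce \eqref{dmch-1} to an algebraic identity in $A$ and $B$ guaranteed by \eqref{dis-1}.

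The main obstacle is the exact coefficient bookkeeping forced by the nonlinear reciprocal transformation. Since the $u$-data and the position data occupy genuinely different difference sectors (second versus first differences, even versus odd), the matching can succeed only if the seemingly ad hoc polynomial $\Gamma_k$ and the two distinct discrete positions $x_k$ and $\tilde x_k$ are precisely the prescribed ones; confirming that every power of $b$ balances --- not merely the leading order, which only recovers the continuum relations $m=u-u_{xx}$ and $m_t+[m(u^2-u_x^2)]_x=0$ --- is the delicate part, and it is exactly where the particular form of $\Gamma_k$ and the auxiliary variable $\tilde x_k$ earn their keep. In practice I would carry all expressions in the variables $A$, $B$, $f_kg_k$ and the differences of $P$ and $R$, reducing $f_kg_k$ via \eqref{dis-1} only at the very end, so that the cancellations in \eqref{dmch-1}--\eqref{dmch-2} surface as polynomial identities in $A$ and $B$.
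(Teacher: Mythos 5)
Your route coincides with the paper's own proof: \eqref{dmch-1} is to come from the $s$-derivative of \eqref{dis-1}, and \eqref{dmch-2} from \eqref{dis-2} rewritten with the help of \eqref{dis-1}. The \eqref{dmch-2} half of your plan is sound: your reduction of \eqref{dis-2} is exactly the paper's equation \eqref{dis-6} written in the variables $A$, $B$, $P$, $R$, the identity $1+\frac{b^2}{4}\bigl(m_k^{-1}-1\bigr)=\frac{2f_kg_k}{A+B}$ is correct, and the $(P_{k+1}-P_{k-1})$ terms do cancel as you claim.

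The genuine gap is in \eqref{dmch-1}, at precisely the step you deferred to a routine check. Put $r=\frac{B-A}{A+B}$, so that $m_k^{-1}-1=\frac{2r}{b}$ and $m_k=\frac{b}{b+2r}$. Differentiating \eqref{dis-1} in $s$ gives the paper's \eqref{dis-4}, which in your variables reads
\begin{gather*}
\left(\frac{1}{b}+\frac{r}{2}\right)(P_{k+1}-2P_k+P_{k-1})+\left(\frac{r}{b}+\frac{1}{2}\right)(R_{k-1}-R_{k+1})=0;
\end{gather*}
combining this with $\frac{\mathrm{d}}{\mathrm{d} t}m_k^{-1}=\frac{1-r^2}{b}(R_{k-1}-R_{k+1})$ and $P_{k+1}-2P_k+P_{k-1}=-b^2\delta u_k$ yields
\begin{gather*}
\frac{\mathrm{d}}{\mathrm{d} t}m_k^{-1}=\frac{b\bigl(1-r^2\bigr)(2+br)}{b+2r}\,\delta u_k
=2m_k\bigl(1-r^2\bigr)\left(1+\frac{br}{2}\right)\delta u_k .
\end{gather*}
Hence the bilinear equations force $\Gamma_k=\bigl(1-r^2\bigr)\bigl(1+\frac{br}{2}\bigr)$, which by \eqref{dis-1} equals $\frac{8ABf_kg_k}{(A+B)^3}$; this is the factorized expression $\bigl(1-\frac{b}{2}\frac{A-B}{A+B}\bigr)\frac{4AB}{(A+B)^2}$ that the paper's proof actually uses. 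Your $\Gamma_k=\bigl(1-r^2\bigr)\bigl(1+\frac{r}{2}\bigr)=\frac{2AB(A+3B)}{(A+B)^3}$ agrees with it only when $b=1$ or $A=B$, so the concluding ``algebraic identity in $A$ and $B$ guaranteed by \eqref{dis-1}'' on which your plan relies is false, and your proof of \eqref{dmch-1} cannot close. Note that your expansion of the polynomial in the theorem statement is itself faultless: the mismatch originates in the statement, whose powers of $b$ in the definition of $\Gamma_k$ are misprinted (consistency with $r=\frac{b}{2}\bigl(m_k^{-1}-1\bigr)$ requires $\Gamma_k=1+\frac{b^2}{4}\bigl(m_k^{-1}-1\bigr)-\frac{b^2}{4}\bigl(m_k^{-1}-1\bigr)^2-\frac{b^4}{16}\bigl(m_k^{-1}-1\bigr)^3$). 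A complete argument must work with the corrected, factorized $\Gamma_k$ and flag the misprint, rather than verify \eqref{dmch-1} against the literal polynomial.
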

Prior to the proof of the theorem, we show that the semi-discrete mCH equations \eqref{dmch-1}--\eqref{dmch-2} converge to the mCH equation \eqref{mch} in the continuous limit $b\rightarrow 0$.

Recall that
$u=1-(\ln f g)_{y { {s}}}$, $x=y+{ {s}}+2 \ln \frac{g}{f}$.
It is obvious that when $b\rightarrow 0$ we have
\begin{align*}
u_k\rightarrow u,\qquad \tilde{x}_k\rightarrow x,\qquad \delta u_k\rightarrow u_y,\qquad \tilde{m}_k^{-1}\rightarrow\frac{\partial x}{\partial y}=\frac{1}{m},\qquad \delta\big(\tilde{m}_k^{-1}\big)\rightarrow \left(\frac{1}{m}\right)_y,
\end{align*}
and furthermore,
$\Gamma_k\rightarrow 1$, $f_{k+1}\rightarrow f_k+bf_{k,y}$, $g_{k+1}\rightarrow g_k+bg_{k,y}$,
which leads to
\begin{align*}
\frac{2}{b} \frac{f_{k-1} g_{k+1}-f_{k+1} g_{k-1}}{f_{k-1} g_{k+1}+f_{k+1} g_{k-1}}\rightarrow \frac{2}{b} \frac{f_{k-1}(g_{k-1}+2bg_{k,y})-g_{k-1}(f_{k-1}+2bf_{k,y})}{f_{k-1}(g_{k-1}+2bg_{k,y})+g_{k-1}(f_{k-1}+2bf_{k,y})}\rightarrow 2\left(\ln\frac{g}{f}\right)_y.
\end{align*}
Therefore, we have
\begin{align*}
\delta x_k=m_k^{-1}\rightarrow 1+2\left(\ln\frac{g}{f}\right)_y=\frac{\partial x}{\partial y}=\frac{1}{m}.
\end{align*}
Thus, we conclude that equations \eqref{dmch-1}--\eqref{dmch-2} converge to
\begin{align}
&\frac{\partial^2x}{\partial y\partial { {s}}}=\left(\frac{1}{m}\right)_{ {s}}=2mu_y,\qquad
m=u-\frac{1}{2}m\left(\frac{1}{m}\right)_{y{ {s}}}=u-m(mu_y)_y=u-u_{xx},\label{mch-x}
\end{align}
respectively. On the other hand, equation \eqref{mch-x} is equivalent to
\begin{align*}
\frac{\partial^2x}{\partial y\partial { {s}}}=2mu_y=2(u-m(mu_y)_y)u_y=\big(u^2-m^2u_y^2\big)_y=\big(u^2-u_x^2\big)_y,
\end{align*}
or
$\frac{\partial x}{\partial { {s}}}=u^2-u_x^2$,
which implies
$\partial_{ {s}}=\partial_t+\big(u^2-u_x^2\big)\partial_x$.
As a result, equation~\eqref{mch-x} leads to
\begin{align*}
m_{ {s}}+2m^3u_y=&m_t+\big(u^2-u_x^2\big)m_x+2m^2u_x
=m_t+\big[m\big(u^2-u_x^2\big)\big]_x=0,
\end{align*}
which is actually the mCH equation \eqref{mch}.

In the following, we present the detailed proof of the theorem.
\begin{proof}
	We rewrite equation~\eqref{dis-1} as
\begin{gather*}
	\frac{1}{b}\left(\frac{f_{k+1} g_{k-1}}{f_k g_k}-2+\frac{f_{k-1} g_{k+1}}{f_k g_k}\right)-\frac{1}{2}\left(\frac{f_{k+1} g_{k-1}}{f_k g_k}+\frac{f_{k-1} g_{k+1}}{f_k g_k}\right) \frac{f_{k+1} g_{k-1}-f_{k-1} g_{k+1}}{f_{k+1} g_{k-1}+f_{k-1} g_{k+1}}=0,
	\end{gather*}
	or equivalently
	\begin{align}
	-\frac{2}{b}+\frac{f_{k+1} g_{k-1}+f_{k-1} g_{k+1}}{f_k g_k}\left(\frac{1}{b}-\frac{1}{2} \frac{f_{k+1} g_{k-1}-f_{k-1} g_{k+1}}{f_{k+1} g_{k-1}+f_{k-1} g_{k+1}}\right)=0.\label{dis-3}
	\end{align}
	By using the identity $\rho_{ {s}}=\rho\left(\ln \rho\right)_{ {s}}$, we have
	\begin{gather*}
	 \left(\frac{f_{k+1} g_{k-1}+f_{k-1} g_{k+1}}{f_k g_k}\right)_{ {s}} \\
	\qquad =\frac{f_{k+1} g_{k-1}}{f_k g_k}\left(\ln \frac{f_{k+1} g_{k-1}}{f_k g_k}\right)_{ {s}}+\frac{f_{k-1} g_{k+1}}{f_k g_k}\left(\ln \frac{f_{k-1} g_{k+1}}{f_k g_k}\right)_{ {s}} \\
	\qquad =\frac{f_{k+1} g_{k-1}+f_{k-1} g_{k+1}}{2 f_k g_k}\left(\ln \frac{f_{k+1} g_{k-1} f_{k-1} g_{k+1}}{f_k^2 g_k^2}\right)_{ {s}}\\
	\phantom{\qquad=}{}+\frac{f_{k-1} g_{k+1}-f_{k+1} g_{k-1}}{2 f_k g_k}\left(\ln \frac{f_{k-1} g_{k+1}}{f_{k+1} g_{k-1}}\right)_{ {s}},
	\end{gather*}
	and
	\begin{align*}
	& \left(\frac{f_{k+1} g_{k-1}-f_{k-1} g_{k+1}}{f_{k+1} g_{k-1}+f_{k-1} g_{k+1}}\right)_{ {s}}
	=-\frac{2 f_{k+1} g_{k-1} f_{k-1} g_{k+1}}{\left(f_{k+1} g_{k-1}+f_{k-1} g_{k+1}\right)^2}\left(\ln \frac{f_{k-1} g_{k+1}}{f_{k+1} g_{k-1}}\right)_{ {s}}.
	\end{align*}
	Therefore, differentiating equation~\eqref{dis-3} with respect to ${ {s}}$ leads to
	\begin{gather*}
	 \left[\frac{f_{k+1} g_{k-1}+f_{k-1} g_{k+1}}{2 f_k g_k}\left(\ln \frac{f_{k+1} g_{k-1} f_{k-1} g_{k+1}}{f_k^2 g_k^2}\right)_{ {s}}\right.\\
\left.	\qquad{}+\frac{f_{k-1} g_{k+1}-f_{k+1} g_{k-1}}{2 f_k g_k}\left(\ln \frac{f_{k-1} g_{k+1}}{f_{k+1} g_{k-1}}\right)_{ {s}}\right]\cdot \left(\frac{1}{b}-\frac{1}{2} \frac{f_{k+1} g_{k-1}-f_{k-1} g_{k+1}}{f_{k+1} g_{k-1}+f_{k-1} g_{k+1}}\right)\\
	\qquad {}+\frac{f_{k+1} g_{k-1}+f_{k-1} g_{k+1}}{f_k g_k} \frac{f_{k+1} g_{k-1} f_{k-1} g_{k+1}}{\left(f_{k+1} g_{k-1}+f_{k-1} g_{k+1}\right)^2}\left(\ln \frac{f_{k-1} g_{k+1}}{f_{k+1} g_{k-1}}\right)_{ {s}}=0.
	\end{gather*}
	
	Dividing both sides by $\frac{f_{k+1} g_{k-1}+f_{k-1} g_{k+1}}{2 f_k g_k}$, we have
	\begin{gather}
	 \left(\frac{1}{b}-\frac{1}{2} \frac{f_{k+1} g_{k-1}-f_{k-1} g_{k+1}}{f_{k+1} g_{k-1}+f_{k-1} g_{k+1}}\right)\left(\ln \frac{f_{k+1} g_{k-1} f_{k-1} g_{k+1}}{f_k^2 g_k^2}\right)_{ {s}}\nonumber\\
	\qquad{}+\frac{1}{b} \frac{f_{k-1} g_{k+1}-f_{k+1} g_{k-1}}{f_{k-1} g_{k+1}+f_{k+1} g_{k-1}}\left(\ln \frac{f_{k-1} g_{k+1}}{f_{k+1} g_{k-1}}\right)_{ {s}} +\frac{1}{2} \left(\ln \frac{f_{k-1} g_{k+1}}{f_{k+1} g_{k-1}}\right)_{ {s}}=0.\label{dis-4}
	\end{gather}
	As $b \rightarrow 0$, equation~\eqref{dis-4} converges to
	\[
	(\ln f g)_{y y { {s}}}+2\left(\ln \frac{g}{f}\right)_y\left(\ln \frac{g}{f}\right)_{y { {s}}}+\left(\ln \frac{g}{f}\right)_{y { {s}}}=0 .
	\]
	
	From the definition of $\Gamma_k$, $u_k$, $\delta u_k$, and $m_k^{-1}$, we have
	\begin{align*}
	\Gamma_k &=1+\frac{m_k^{-1}-1}{4}b-\frac{\big(m_k^{-1}-1\big)^2}{4}b^2-\frac{\big(m_k^{-1}-1\big)^3}{16}b^3\\
	&=\left(1-\frac{b}{2} \frac{f_{k+1} g_{k-1}-f_{k-1} g_{k+1}}{f_{k+1} g_{k-1}+f_{k-1} g_{k+1}}\right) \left(1-\left(\frac{f_{k+1} g_{k-1}-f_{k-1} g_{k+1}}{f_{k+1} g_{k-1}+f_{k-1} g_{k+1}}\right)^2\right)\\
	& =\left(1-\frac{b}{2} \frac{f_{k+1} g_{k-1}-f_{k-1} g_{k+1}}{f_{k+1} g_{k-1}+f_{k-1} g_{k+1}}\right) \frac{4 f_{k+1} g_{k-1} f_{k-1} g_{k+1}}{\left(f_{k+1} g_{k-1}+f_{k-1} g_{k+1}\right)^2}.
	\end{align*}
	Then equation~\eqref{dis-4} leads to
	\begin{align}
	\big({m}_k^{-1}\big)_{ {s}}=2 {m}_k \Gamma_k \delta u_k.\label{dis-4'}
	\end{align}
	Since $\delta x_k=m_k^{-1}$, one can rewrite equation~\eqref{dis-4'} as
$
	\frac{\mathrm{d} \delta x_k}{\mathrm{d} t}=2 {m}_k \Gamma_k (\delta u_k)$,
	which constitutes the first equation of the semi-discrete mCH equation. Now we are ready to deduce the second equation of the semi-discrete mCH equation.
	We rewrite equation~\eqref{dis-2} into
	\begin{align*}
	& \frac{1}{b}\left(f_{k+1} g_{k-1}\left(\ln \frac{f_{k+1}}{g_{k-1}}\right)_{ {s}}-2 f_k g_k\left(\ln \frac{f_k}{g_k}\right)_{ {s}}+f_{k-1} g_{k+1}\left(\ln \frac{f_{k-1}}{g_{k+1}}\right)_{ {s}}\right) \\
	& \qquad{}-\frac{1}{2}\left(f_{k+1} g_{k-1}\left(\ln \frac{f_{k+1}}{g_{k-1}}\right)_{ {s}}-f_{k-1} g_{k+1}\left(\ln \frac{f_{k-1}}{g_{k+1}}\right)_{ {s}}\right)-\left(f_{k+1} g_{k-1}-f_{k-1} g_{k+1}\right)=0.
	\end{align*}
	Thus, we have
	\begin{align}
	& \frac{1}{b}\left(\ln \frac{f_{k+1} f_{k-1}}{g_{k+1} g_{k-1}}\right)_{ {s}}-\frac{4}{b} \frac{f_k g_k}{f_{k+1} g_{k-1}+f_{k-1} g_{k+1}}\left(\ln \frac{f_k}{g_k}\right)_{ {s}}\nonumber\\
	&\qquad{}+\frac{1}{b} \frac{f_{k+1} g_{k-1}-f_{k-1} g_{k+1}}{f_{k+1} g_{k-1}+f_{k-1} g_{k+1}}\left(\ln \frac{f_{k+1} g_{k+1}}{f_{k-1} g_{k-1}}\right)_{ {s}} -\frac{1}{2}\left(\ln \frac{f_{k+1} g_{k+1}}{f_{k-1} g_{k-1}}\right)_{ {s}}\nonumber \\
	&\qquad{}-\frac{1}{2}\frac{f_{k+1} g_{k-1}-f_{k-1} g_{k+1}}{f_{k+1} g_{k-1}+f_{k-1} g_{k+1}}\left(\ln \frac{f_{k+1} f_{k-1}}{g_{k+1} g_{k-1}}\right)_{ {s}}-2 \frac{f_{k+1} g_{k-1}-f_{k-1} g_{k+1}}{f_{k+1} g_{k-1}+f_{k-1} g_{k+1}}=0.\label{dis-5}
	\end{align}
	By rewriting equation~\eqref{dis-1} as
\[
	\frac{1}{b} \frac{2 f_k g_k}{f_{k+1} g_{k-1}+f_{k-1} g_{k+1}}=\frac{1}{b}-\frac{1}{2}\frac{f_{k+1} g_{k-1}-f_{k-1} g_{k+1}}{f_{k+1} g_{k-1}+f_{k-1} g_{k+1}},
	\]
	and substituting it into equation~\eqref{dis-5}, one obtains
	\begin{align}
	& \frac{1}{b}\frac{2 f_k g_k}{f_{k+1} g_{k-1}+f_{k-1} g_{k+1}}\left(\ln \frac{f_{k+1} f_{k-1} g_k^2}{g_{k+1} g_{k-1} f_k^2}\right)_{ {s}}+\frac{1}{b} \frac{f_{k+1} g_{k-1}-f_{k-1} g_{k+1}}{f_{k+1} g_{k-1}+f_{k-1} g_{k+1}}\left(\ln \frac{f_{k+1} g_{k+1}}{f_{k-1} g_{k-1}}\right)_{ {s}} \notag\\
	& \qquad{}-\frac{1}{2}\left(\ln \frac{f_{k+1} g_{k+1}}{f_{k-1} g_{k-1}}\right)_{ {s}}-2 \frac{f_{k+1} g_{k-1}-f_{k-1} g_{k+1}}{f_{k+1} g_{k-1}+f_{k-1} g_{k+1}}=0.\label{dis-6}
	\end{align}
	From the definition of $u_k$ and $\delta\big(\tilde{m}_k^{-1}\big)$, one can obtain
\smash{$u_{k+1}+u_k=2-\frac{1}{b}\bigl(\ln \frac{g_{k+1} f_{k+1}}{g_{k-1} f_{k-1}}\bigr)_{ {s}}$}
	equation~\eqref{dis-6} can be rewritten as
	\begin{align*}
	-\frac{b}{2}\left(1+\frac{b^2}{4}\big(m_k^{-1}-1\big)\right)\left(\delta\big(\tilde{m}_k^{-1}\big)\right)_{ {s}}-bm_k^{-1}\left(1-\frac{u_k+u_{k+1}}{2}\right)+b\big(m_k^{-1}-1\big)=0,
	\end{align*}
	which can be shown equivalent to equation~\eqref{dmch-2}. The proof is completed.
\end{proof}

 From Theorems \ref{dis-sol} and \ref{dmch}, we find that the semi-discrete mCH equation \eqref{dmch-1}--\eqref{dmch-2} admits $N$-soliton solution in the determinant form
\begin{align*}
& u_k=1-\frac{1}{b}\left(\ln \frac{g_k f_k}{g_{k-1} f_{k-1}}\right)_{ {s}},\qquad \delta x_k\equiv\frac{x_{k+1}-x_k}{b}=1+\frac{2}{b} \frac{f_{k-1} g_{k+1}-f_{k+1} g_{k-1}}{f_{k-1} g_{k+1}+f_{k+1} g_{k-1}},\\
& m_k=(\delta x_k)^{-1},
\end{align*}
where $f_k$, $g_k$ are given by the determinant \eqref{gram2}.

\subsection{One- and two-soliton solutions}
\subsubsection{One-soliton solutions}
The $\tau$-functions for the one-soliton solution of the semi-discrete mCH equation in Theorem \ref{dmch} are\looseness=-1
\begin{align*}
f_k\propto1+\left(\frac{1-bp}{1+bp}\right)^{-k}{\rm e}^{\zeta},\qquad g_k\propto1+\left(-\frac{2p+1}{2p-1}\right)^{-1}\left(\frac{1-bp}{1+bp}\right)^{-k}{\rm e}^{\zeta},
\end{align*}
with $\zeta=-\frac{4p}{1-4p^2}{ {s}}+\zeta_0$. Here we set $p = p_1$ for simplicity. Thus, we can obtain the one-soliton solution in a parametric form
\begin{align*}
u_k&=1-\frac{1}{b}\left(\ln \frac{g_k f_k}{g_{k-1} f_{k-1}}\right)_{ {s}}=1-\frac{1}{b}\frac{4p}{1-4p^2}\left(\frac{1}{f_k}+\frac{1}{g_k}-\frac{1}{f_{k-1}}-\frac{1}{g_{k-1}}\right),\\
x_k&=x_0+b\sum_{i=0}^{k-1}\delta x_i=x_0+(k-1)b+2 \sum_{i=0}^{k-1}\frac{f_{i-1} g_{i+1}-f_{i+1} g_{i-1}}{f_{i-1} g_{i+1}+f_{i+1} g_{i-1}}.
\end{align*}
When we take $b=0.1$, $\zeta_0=0$, and choose appropriate $x_0$ such that the solution $u_k$ is symmetric with respect to $x_k$, Figure \ref{dis-1-soliton-fig} displays two different kinds of solutions for the semi-discrete mCH equation under different $p$ values. Figure \ref{1-soliton-fig} depicts a one-soliton solution to the semi-discrete mCH equation while comparing with the one-soliton solution to the mCH equation. When~\smash{$0<|p|<\frac{\sqrt{3}}{4}$}, the solution $u_k$ is single-valued with one peak since $\delta_k>0$ (see Figure~\ref{dis-mch-1}). Figure~\ref{dis-mch-2} illustrates the symmetric singular soliton solutions that are three-valued with two spikes for \smash{$\frac{\sqrt{3}}{4}<|p|<\frac{1}{2}$}. Figure~\ref{1-soliton-fig} shows the comparison among the one-soliton solutions for the mCH equation in \cite{hu-soliton,matsuno1} and the semi-discrete mCH equation at $t=0$. It should be pointed out that the semi-discrete analogue of the mCH equation with linear dispersion term admits anti-symmetric singular soliton solutions (see \cite[Figures 1C and 2C]{mch-SYF}), while the semi-discrete mCH equation without linear dispersion term we proposed here does not admit such singular solution.

\begin{figure}[t]
	\centering
	\subfigure[Smooth soliton solutions]
	{
		\label{dis-mch-1}
		\includegraphics[width=1.5in]{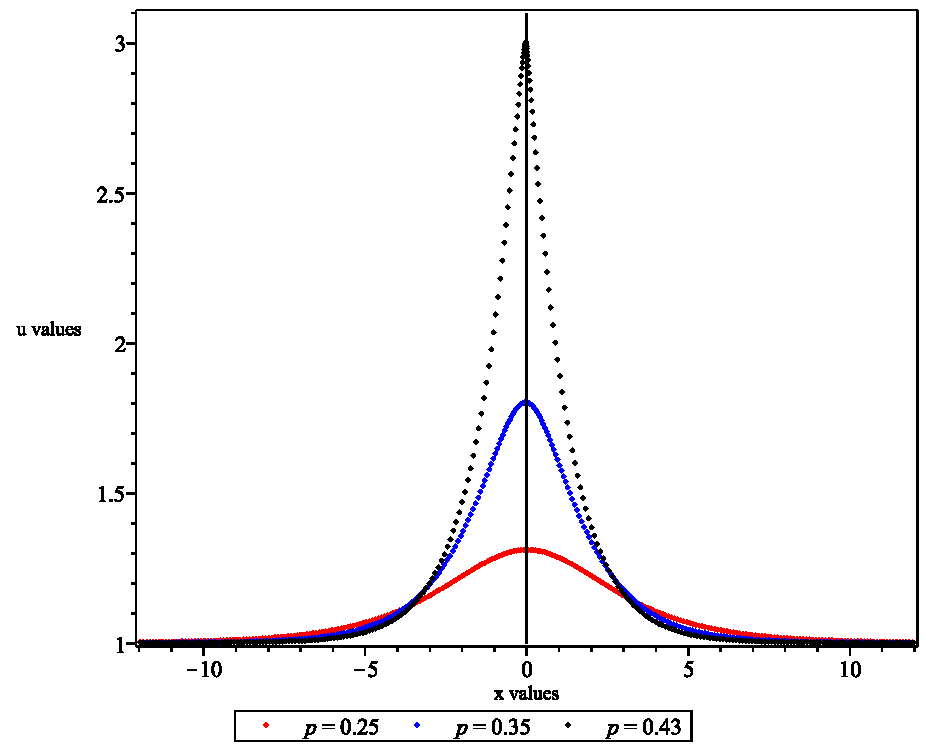}
	}
	\hspace*{3em}
	\subfigure[Symmetric singular soliton solutions]
	{\label{dis-mch-2}
		\includegraphics[width=1.5in]{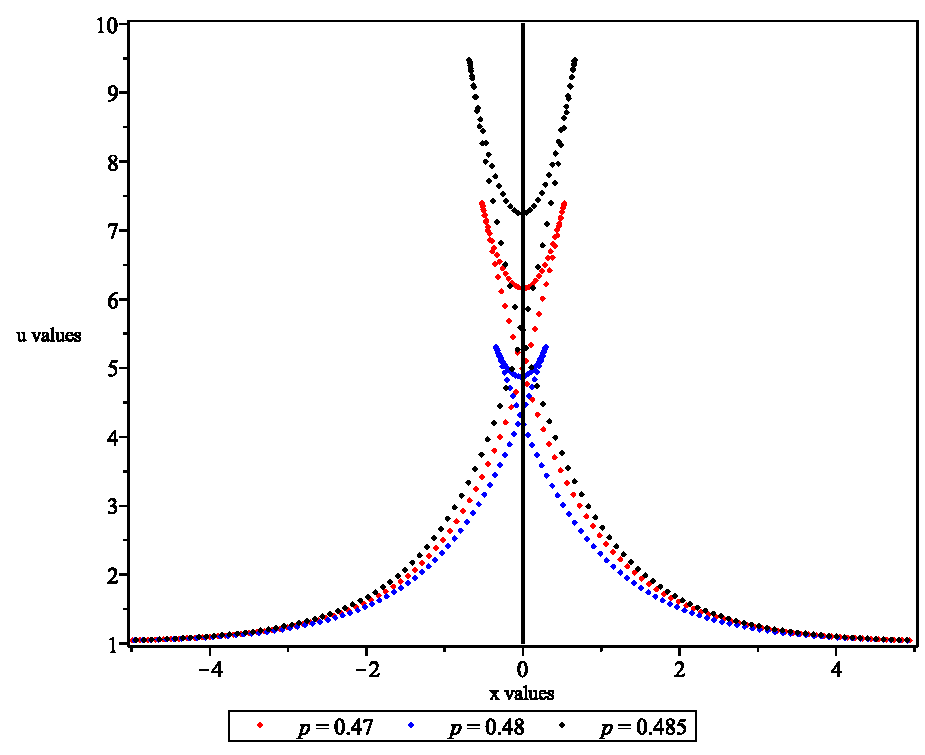}
	}
	\caption{Two different kinds of solutions for the semi-discrete mCH equation at $t=0$. (a) Smooth soliton solutions, (b) Symmetric singular soliton solutions.}\label{dis-1-soliton-fig}
\end{figure}
\begin{figure}
	\centering
	\subfigure[Smooth soliton solutions]
	{
		\label{mch-1}
		\includegraphics[width=1.5in]{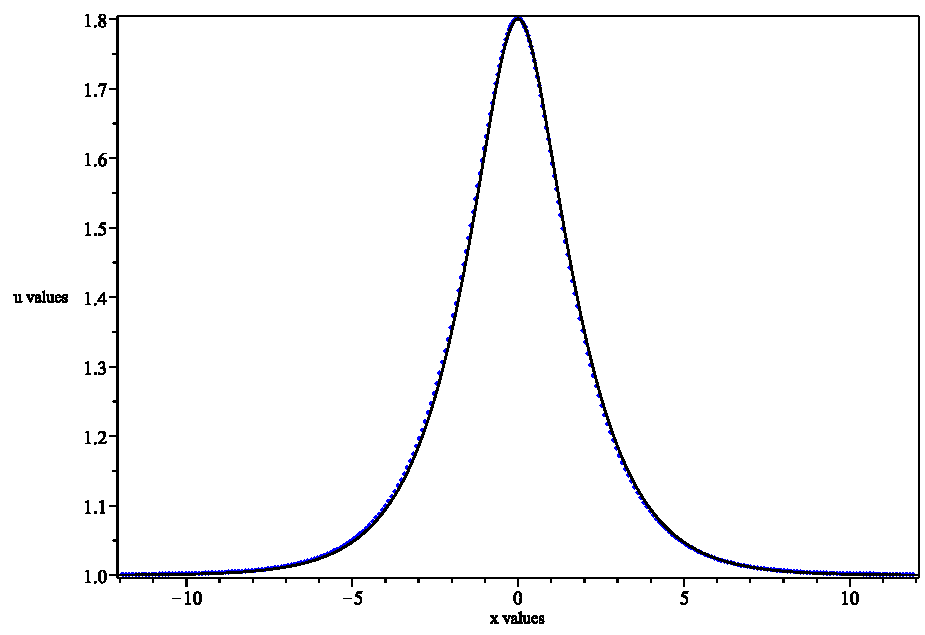}
	}
	\hspace*{3em}
	\subfigure[Symmetric singular soliton solutions]
	{\label{mch-2}
		\includegraphics[width=1.5in]{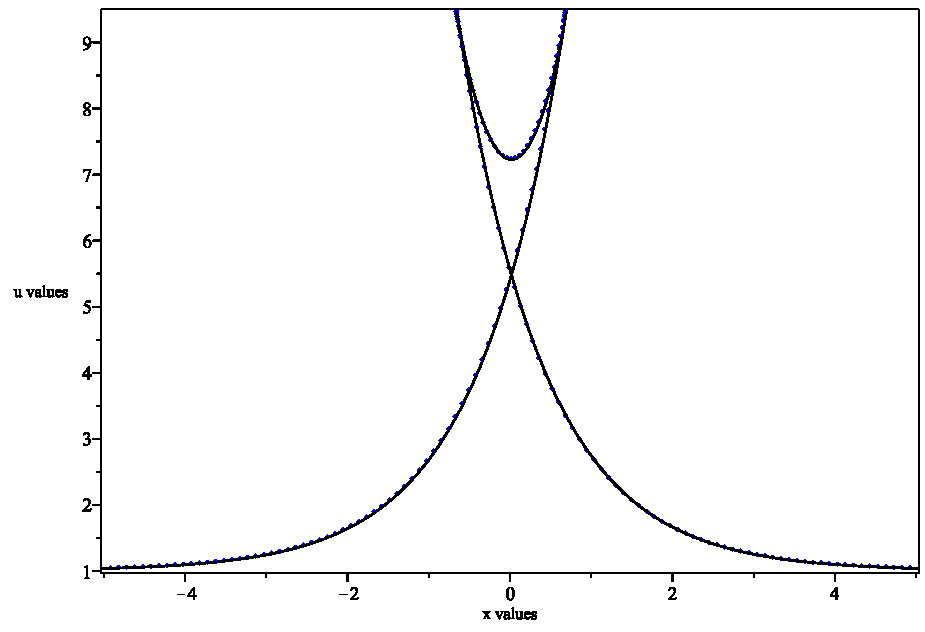}
	}
	\caption{Comparison between the one-soliton solution for the mCH equation and the semi-discrete mCH equation at $t=0$; solid line: mCH equation, dot: semi-discrete mCH equation. (a)~$p=0.35$, (b)~$p=0.485$.}\label{1-soliton-fig}
\end{figure}

\subsubsection{Two-soliton solutions}
The $\tau$-functions for the two-soliton solution of the semi-discrete mCH equation in Theorem \ref{dmch} are
\begin{align*}
f_k\propto& 1+z_1^{-k}{\rm e}^{\zeta_1}+z_2^{-k}{\rm e}^{\zeta_2}+\left(\frac{p_1-p_2}{p_1+p_2}\right)^2(z_1z_2)^{-k}{\rm e}^{\zeta_1+\zeta_2},\\
g_k\propto& 1+\frac{1-2p_1}{1+2p_1}z_1^{-k}{\rm e}^{\zeta_1}+\frac{1-2p_2}{1+2p_2}z_2^{-k}{\rm e}^{\zeta_2}+\frac{1-2p_1}{1+2p_1}\frac{1-2p_2}{1+2p_2}\left(\frac{p_1-p_2}{p_1+p_2}\right)^2(z_1z_2)^{-k}{\rm e}^{\zeta_1+\zeta_2},
\end{align*}
with \smash{$z_i=\frac{1-bp_i}{1+bp_i}$} and \smash{$\zeta_i=-\frac{4p_i}{1-4p_i^2}{ {s}}+\zeta_{i0}$}. We take $b=0.1$ and $\zeta_{i0}=0$. Figure \ref{2-soliton-fig} displays the collision between two smooth solitons. One can see that the soliton with a higher peak moves faster than the lower one. It can be found that there is a strong agreement between the two-soliton solution of the semi-discrete mCH equation and the mCH equation.
\begin{figure}
	\centering
	\subfigure[$t=-15$]
	{
		\label{2sol1}
		\includegraphics[width=1.5in]{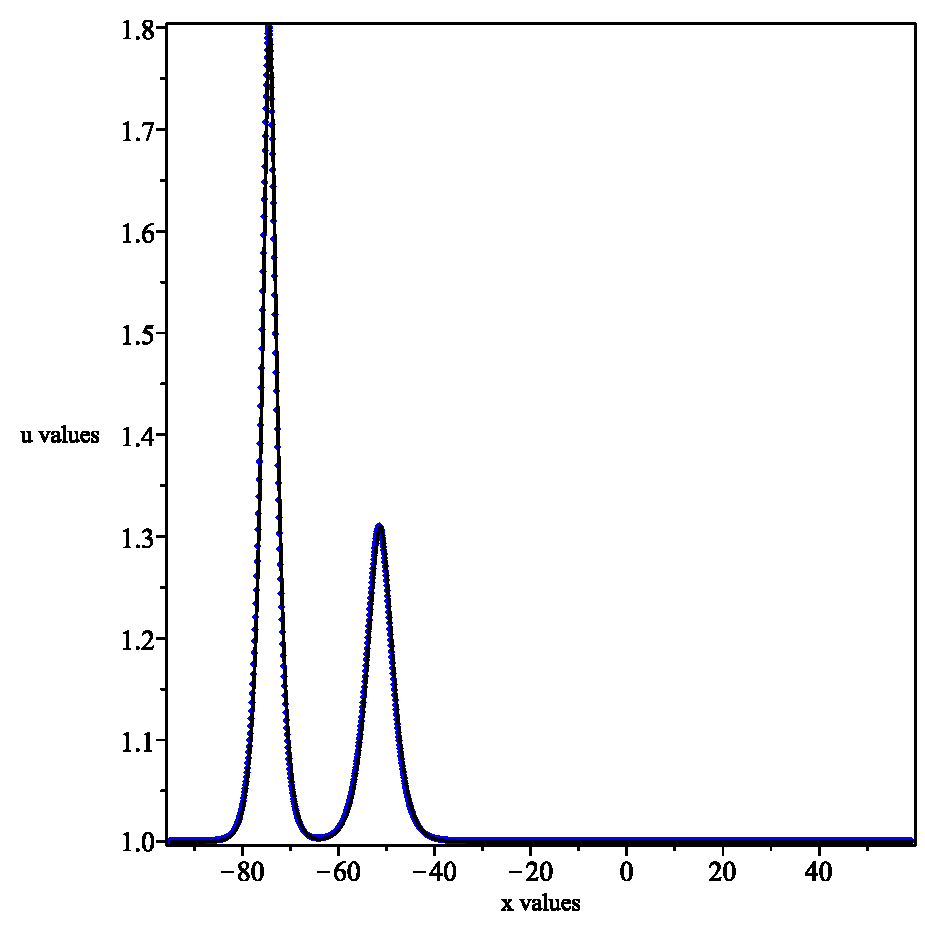}
	}
	\hspace*{3em}
	\subfigure[$t=0$]
	{\label{2sol2}
		\includegraphics[width=1.5in]{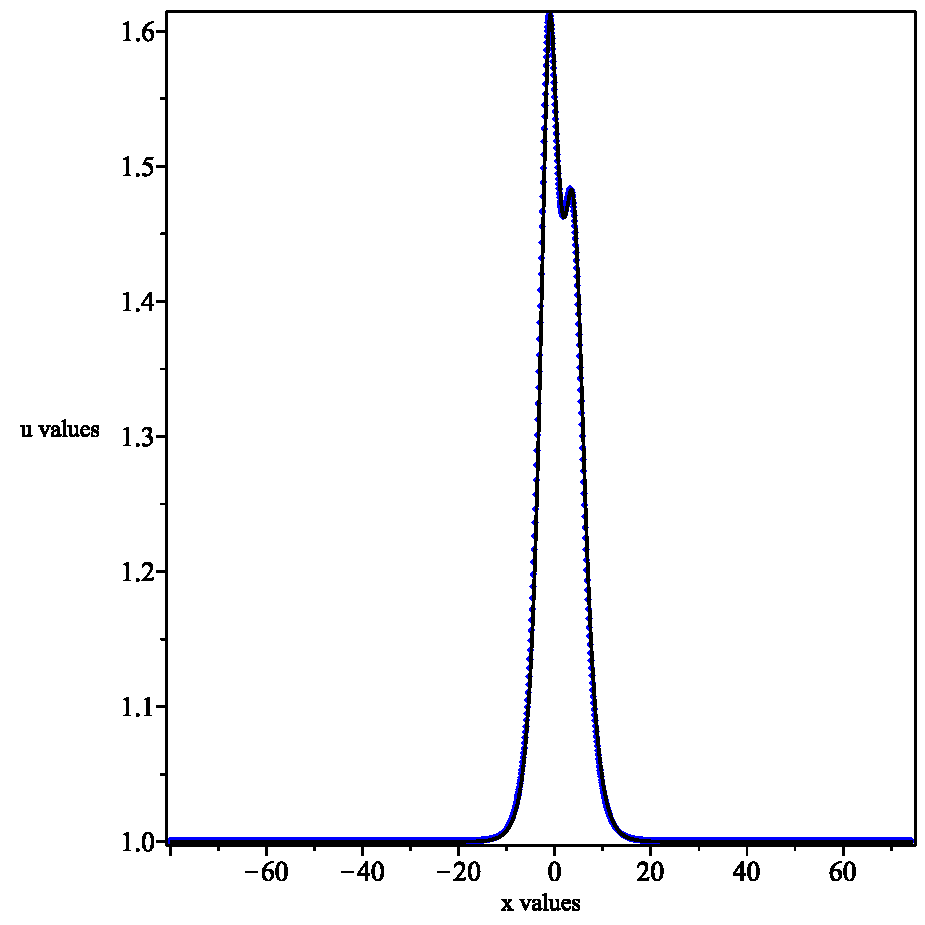}
	}
	\hspace*{3em}
	\subfigure[$t=15$]
	{\label{2sol3}
		\includegraphics[width=1.5in]{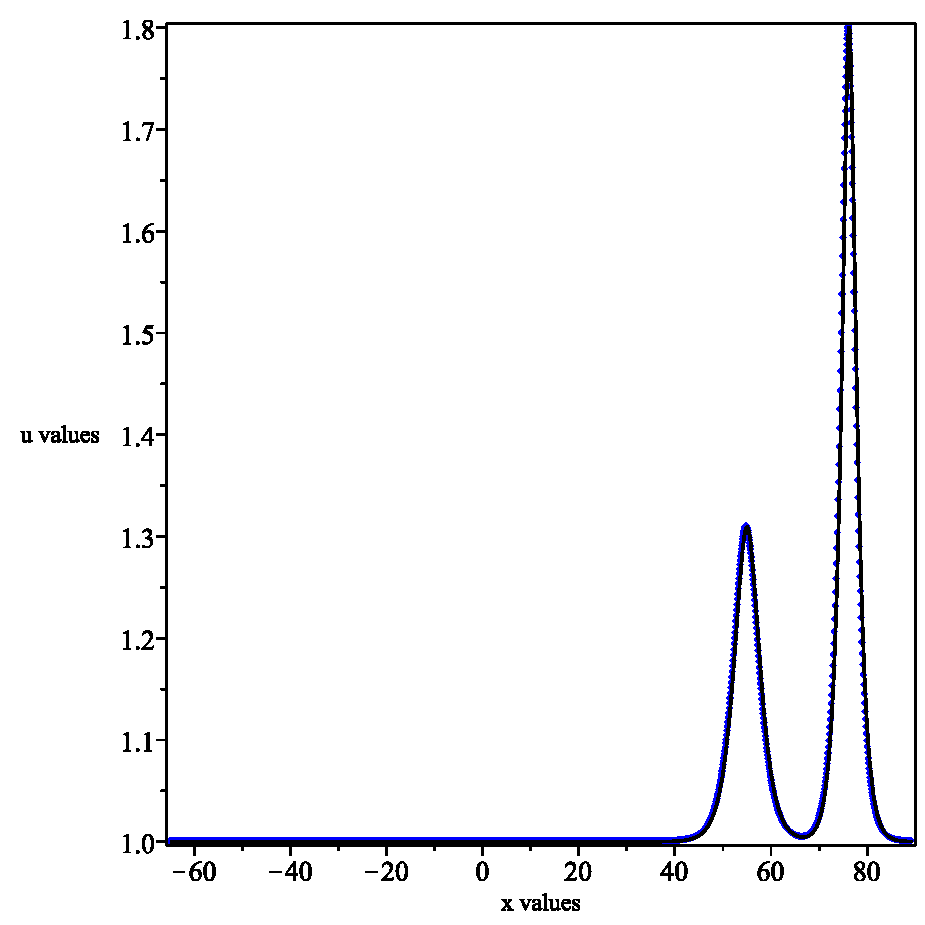}
	}
	\caption{Comparison between the two-soliton solution of the mCH and the semi-discrete mCH
		equation with $p_1=0.25$, $p_2=0.35$; solid line: mCH equation; dot: semi-discrete mCH equation. (a)~$t=-15$, (b)~$t=0$, (c)~$t=15$.}\label{2-soliton-fig}
\end{figure}

It is shown that the proposed semi-discrete mCH equation, and one and two-soliton solution converge to ones of the original mCH equation in the continuum limit. This is the reason that there is no substantial difference between soliton solutions of the mCH equation and its semi-discrete version. The situation also holds for the semi-discrete integrable DP equation, generalized sine-Gordon equation and short pulse equation.

\section{Conclusion and discussion}\label{conclusion}

In this paper, starting from the discrete KP equation, we have constructed an integrable semi-discrete analog of the mCH equation with cubic nonlinearity through Miwa transformation and a series of reductions. Gram-type determinant solutions for the semi-discrete mCH equation has been derived. Smooth soliton solutions and symmetric singular soliton solutions are generated from the determinant formulas. The discrete KP equation is once again shown to be the fundamental equation for integrable systems, in line with the findings by Hirota, Ohta, Tsujimoto, Nimmo, and so on. Furthermore, there are a few aspects that deserve further study. Firstly, the Lax pair associated with the semi-discrete mCH equation is still unknown. How to generate the Lax pair for the derived discrete integrable systems based on the Lax pair of discrete KP equation is left to be investigated. Secondly, here we only find semi-discrete version of the mCH equation and the full-discrete analogue of the mCH is left to be considered. Thirdly, connections between the discrete KP equation and the two-component CH equation~\cite{YoujinLMP}, the two-component mCH equation~\cite{SQQ}, the complex short pulse equation~\cite{Feng15} and the massive Thirring model equation~\cite{MikhailovMT,Thirring} are worth investigating.

 DP equation and Novikov equation are peakon-type integrable nonlinear partial differential equations with higher-order nonlinearity. An integrable semi-discrete DP equation has been constructed from a pseudo 3-reduction of the CKP hierarchy \cite{dDP}. Integrable semi-discretizations for the short wave limit of the Novikov equation was presented in \cite{Feng_swNV}. To the best of our knowledge, integrable discrete analogues of the Novikov equation based on the methodology here have not been reported. Furthermore, how to apply the proposed integrable semi-discrete mCH equation as a self-adaptive moving mesh scheme for the numerical simulation of the mCH equation deserves further exploration. These intriguing topics will be addressed in our future study.\looseness=-1

An infinite number of conservation laws of the mCH equation was found in \cite{matsuno1} based on the B{\"a}cklund transformation, which read as
\[
I_{2n+1}=\sum_{m=0}^{n+1}\nu_{nm}\tilde{I}_m.
\] Here $\nu_{nm}$ are constants depending on the constant background $u_0$, and the first three of $\tilde{I}_m$ are expressed as
	\begin{gather*}
	\tilde{I}_0=\int_{-\infty}^{\infty}\left(m-u_0\right) \mathrm{d} x, \qquad
	\tilde{I}_1=\int_{-\infty}^{\infty}\left(\frac{1}{m}-\frac{1}{u_0}\right) \mathrm{d} x,\\
	\tilde{I}_2=\int_{-\infty}^{\infty}\left[\frac{1}{m^3}-\frac{1}{u_0^3}+4 \frac{m_x^2}{m^5}\right] \mathrm{d} x.
	\end{gather*}	
However, it is much more difficult to construct conservation laws for the derived integrable discrete analogues. As far as we know, conservation laws for the integrable semi-discrete CH equation \cite{Feng3} and semi-discrete DP equation \cite{dDP} have not been obtained. The reason may lie in the fact that the form of the semi-discrete equations are more complex than the continuous ones. As shown in Theorem~\ref{dmch}, we have to introduce two discrete analogues~$m_k$, $\tilde{m}_k$ corresponding to $m$. Though we express the semi-discrete analogue in an explicit form, and equation~\eqref{dmch-1} is nearly the form of conservation law, we fail to generate the conserved quantities for the semi-discrete mCH equation. How to construct the conserved quantities of the semi-discrete equations deserves further consideration.

\subsection*{Acknowledgements}
We greatly appreciate the anonymous referees' useful comments which help us improve the present paper significantly. G.-F.~Yu is supported by National Natural Science Foundation of China (Grant nos.~12175155, 12371251), Shanghai Frontier Research Institute for Modern Analysis and the Fundamental Research Funds for the Central Universities. B.-F.~Feng's work is supported by the U.S. Department of Defense (DoD), Air Force for Scientific Research (AFOSR) under grant No.~W911NF2010276.

\pdfbookmark[1]{References}{ref}
\LastPageEnding

\end{document}